\documentclass[sigconf, 10pt]{acmart}
\AtBeginDocument{%
  }

\setcopyright{acmlicensed}
\copyrightyear{2026}
\acmYear{2026}
\setcopyright{cc}
\setcctype{by}
\acmConference[MobiCom '26]{The 32nd Annual International Conference on Mobile Computing and Networking}{October 26--30, 2026}{Austin, TX, USA}
\acmBooktitle{The 32nd Annual International Conference on Mobile Computing and Networking (MobiCom '26), October 26--30, 2026, Austin, TX, USA}
\acmDOI{10.1145/3795866.3844486}
\acmISBN{979-8-4007-2505-0/2026/10}

\usepackage{tikz}
\usepackage[utf8]{inputenc}
\usepackage[normalem]{ulem}
\usepackage{subcaption}
\newcommand{\name}{{mmFHE}}

\usepackage{xcolor}

\definecolor{revcolor}{RGB}{200, 0, 0}
\newif\ifshowrevisions
\showrevisionsfalse
\ifshowrevisions
  \newcommand{\rev}[1]{{\color{revcolor}#1}}
  \newcommand{\revdel}[1]{{\color{revcolor}\sout{#1}}}
  \newenvironment{revblock}{\par\color{revcolor}}{\par}
\else
  \newcommand{\rev}[1]{#1}
  \newcommand{\revdel}[1]{}
  \newenvironment{revblock}{}{}
\fi

\begin{document}

%%
%% The "title" command has an optional parameter,
%% allowing the author to define a "short title" to be used in page headers.
\title{\name{}: mmWave Sensing with End-to-End Fully Homomorphic Encryption}

%%
%% The "author" command and its associated commands are used to define
%% the authors and their affiliations.
%% Of note is the shared affiliation of the first two authors, and the
%% "authornote" and "authornotemark" commands
%% used to denote shared contribution to the research.
% \author{Ben Trovato}
% \authornote{Both authors contributed equally to this research.}
% \email{trovato@corporation.com}
% \orcid{1234-5678-9012}
% \author{G.K.M. Tobin}
% \authornotemark[1]
% \email{webmaster@marysville-ohio.com}
% \affiliation{%
%   \institution{Institute for Clarity in Documentation}
%   \city{Dublin}
%   \state{Ohio}
%   \country{USA}
% }

\author{Tanvir Ahmed}
\affiliation{%
  \institution{Cornell Tech}
  % \city{New York}
  % \state{NY}
  \country{{ }}
}
\email{tanvir@infosci.cornell.edu}
\orcid{0000-0002-9468-5033}

\author{Yixuan Gao}
\affiliation{%
  \institution{Cornell Tech}
  % \city{New York}
  % \state{NY}
  \country{{ }}
}
\email{yixuan@cs.cornell.edu}
\orcid{0000-0003-1778-3104}

\author{Adnan Armouti}
\affiliation{%
  \institution{Cornell Tech}
  % \city{New York}
  % \state{NY}
  \country{{ }}
}
\email{armouti@cs.cornell.edu}
% \orcid{0000-0002-9468-5033}

\author{Rajalakshmi Nandakumar}
\affiliation{%
  \institution{Cornell Tech}
  % \city{New York}
  % \state{NY}
  \country{{ }}
}
\email{rajalakshmi.nandakumar@cornell.edu}
\orcid{0000-0002-1601-148X}

% \author{Tanvir Ahmed, Yixuan Gao, Adnan Armouti, Rajalakshmi Nandakumar}
% % \author{Yixuan Gao}
% % \author{Adnan Armouti}
% % \author{Rajalakshmi Nandakumar}

% \affiliation{%
%   \institution{Cornell Tech}
%   \city{New York}
%   \state{New York}
%   \country{USA}
% }

% % \email{tanvir@infosci.cornell.edu}
% % \orcid{0000-0002-9468-5033}

% % \email{yixuan@cs.cornell.edu}
% % \orcid{0000-0003-1778-3104}

% % \email{armouti@cs.cornell.edu}
% % % \orcid{0000-0002-9468-5033}

% % \email{rajalakshmi.nandakumar@cornell.edu}
% % \orcid{0000-0002-1601-148X}
% \email{\{tanvir@infosci.,yixuan@cs.,armouti@cs.,rajalakshmi.nandakumar@\}cornell.edu}

%%
%% By default, the full list of authors will be used in the page
%% headers. Often, this list is too long, and will overlap
%% other information printed in the page headers. This command allows
%% the author to define a more concise list
%% of authors' names for this purpose.
\renewcommand{\shortauthors}{Ahmed et al.}

%%
%% The abstract is a short summary of the work to be presented in the
%% article.
\begin{abstract}
%-------------------------------------------------------------------------------

%-------------------------------------------------------------------------------
We present \name{}, the first system that \rev{executes the entire cloud-side mmWave sensing pipeline including the DSP and ML inference under fully homomorphic encryption (FHE)}. \name{} encrypts range profiles on \rev{an} edge device \rev{after lightweight plaintext preprocessing} and executes the entire mmWave signal-processing and ML inference pipeline homomorphically on a \rev{semi-honest} cloud that operates exclusively on ciphertexts. At the core of \name{} is a library of seven composable, data-oblivious FHE kernels that replace standard DSP routines with fixed arithmetic circuits for different application-specific pipelines. \revdel{These kernels can be flexibly composed into different application-specific pipelines.} We demonstrate this approach on two representative tasks: vital-sign monitoring and gesture recognition. We formally prove two cryptographic guarantees for any pipeline assembled from this library: \textit{input privacy} \revdel{, the cloud learns nothing about the sensor data;} and \textit{data obliviousness}\revdel{, the execution trace is identical on the cloud regardless of the data being processed}. These guarantees effectively neutralize various supervised and unsupervised privacy attacks on raw data, including re-identification and data-dependent privacy leakage. Evaluation on three public radar datasets shows that encryption introduces negligible error\revdel{: HR/RR MAE ${<}\,10^{-3}$\,bpm} versus \rev{the} plaintext \rev{pipeline}, with 84.5\% gesture accuracy (vs.\ 84.7\%).\revdel{With our batched ciphertext packing, }\rev{ End-to-end cloud GPU latency is 1.21\,s per 10\,s vital-sign window and 5.76\,s per 3\,s gesture window. These results establish the initial feasibility of end-to-end mmWave sensing under FHE on commodity hardware.}

\end{abstract}

%%
%% The code below is generated by the tool at http://dl.acm.org/ccs.cfm.
%% Please copy and paste the code instead of the example below.
%%
% \begin{CCSXML}
% <ccs2012>
%  <concept>
%   <concept_id>00000000.0000000.0000000</concept_id>
%   <concept_desc>Do Not Use This Code, Generate the Correct Terms for Your Paper</concept_desc>
%   <concept_significance>500</concept_significance>
%  </concept>
%  <concept>
%   <concept_id>00000000.00000000.00000000</concept_id>
%   <concept_desc>Do Not Use This Code, Generate the Correct Terms for Your Paper</concept_desc>
%   <concept_significance>300</concept_significance>
%  </concept>
%  <concept>
%   <concept_id>00000000.00000000.00000000</concept_id>
%   <concept_desc>Do Not Use This Code, Generate the Correct Terms for Your Paper</concept_desc>
%   <concept_significance>100</concept_significance>
%  </concept>
%  <concept>
%   <concept_id>00000000.00000000.00000000</concept_id>
%   <concept_desc>Do Not Use This Code, Generate the Correct Terms for Your Paper</concept_desc>
%   <concept_significance>100</concept_significance>
%  </concept>
% </ccs2012>
% \end{CCSXML}

% \ccsdesc[500]{Do Not Use This Code~Generate the Correct Terms for Your Paper}
% \ccsdesc[300]{Do Not Use This Code~Generate the Correct Terms for Your Paper}
% \ccsdesc{Do Not Use This Code~Generate the Correct Terms for Your Paper}
% \ccsdesc[100]{Do Not Use This Code~Generate the Correct Terms for Your Paper}

\begin{CCSXML}
<ccs2012>
   <concept>
       <concept_id>10003120.10003138</concept_id>
       <concept_desc>Human-centered computing~Ubiquitous and mobile computing</concept_desc>
       <concept_significance>500</concept_significance>
       </concept>
   <concept>
       <concept_id>10002978.10002991.10002995</concept_id>
       <concept_desc>Security and privacy~Privacy-preserving protocols</concept_desc>
       <concept_significance>500</concept_significance>
       </concept>
   <concept>
       <concept_id>10002978.10002979.10002981</concept_id>
       <concept_desc>Security and privacy~Public key (asymmetric) techniques</concept_desc>
       <concept_significance>500</concept_significance>
       </concept>
 </ccs2012>
\end{CCSXML}

\ccsdesc[500]{Human-centered computing~Ubiquitous and mobile computing}
\ccsdesc[500]{Security and privacy~Privacy preserving protocols}
\ccsdesc[500]{Security and privacy~Public key (asymmetric) techniques}

%%
%% Keywords. The author(s) should pick words that accurately describe
%% the work being presented. Separate the keywords with commas.
% \keywords{Do, Not, Use, This, Code, Put, the, Correct, Terms, for,
%   Your, Paper}
\keywords{mmWave sensing, fully homomorphic encryption, privacy-preserving computation, wireless sensing}

%% A "teaser" image appears between the author and affiliation
%% information and the body of the document, and typically spans the
%% page.
% \begin{teaserfigure}
%   \includegraphics[width=\textwidth]{sampleteaser}
%   \caption{Seattle Mariners at Spring Training, 2010.}
%   \Description{Enjoying the baseball game from the third-base
%   seats. Ichiro Suzuki preparing to bat.}
%   \label{fig:teaser}
% \end{teaserfigure}

% \received{20 February 2007}
% \received[revised]{12 March 2009}
% \received[accepted]{5 June 2009}

%%
%% This command processes the author and affiliation and title
%% information and builds the first part of the formatted document.
\maketitle

\section{Introduction}
\label{sec:intro}

Millimeter-wave (mmWave) radar has emerged as a new sensing modality for mobile health and ubiquitous computing, enabling continuous and unobtrusive monitoring of physiological and behavioral metrics such as respiration~\cite{adib2015smart}, heart rate~\cite{alizadeh2019remote}, gait~\cite{meng2020gait}, sleep stages~\cite{zhao2017learning}, and fall detection~\cite{jin2020mmfall}. Compared \rev{to} cameras, radars operate unobtrusively in the background, provide privacy and avoid capturing identifying visual information, properties that make it well suited for continuous in-home health monitoring.

However, as the complexity of mmWave applications increases, so does the required computation. Edge devices often need to offload heavy computation (data processing, ML model inference) to cloud services. But offloading raw sensor data to a server introduces privacy risks on multiple fronts. First, transferring raw data to a remote server is inherently risky as not all cloud servers can be trusted. Second, raw radar data encodes far more than the target task requires, and a \rev{semi-honest (honest-but-curious)} server can extract unintended information from it. For instance, a server running a gesture recognition system receives raw signals that also contain the user's heartbeat, respiration, and biometric identity. Third, even without accessing data content, a server can exploit data-dependent execution patterns in the processing pipeline to infer the input~\cite{ohrimenko2016oblivious}. The same risk applies to sensing pipelines, where data-dependent DSP routines expose scene content through execution behavior.

\rev{A natural} solution \rev{to these privacy risks} is to perform all computations on the device without sending data to the cloud. However, that significantly limits the applications that can be achieved. Recent work, RPRS~\cite{wu2025rprs}, has explored using fully homomorphic encryption (FHE) as a cryptographic tool to partially address this problem: the entire upstream DSP chain executes in plaintext on the device, and only the final representation is encrypted before being sent to the cloud for neural-network inference. This protects the model weights, but not the user: all intermediate representations remain visible to the local server, and data-dependent DSP on unencrypted inputs can still leak information through timing side channels.

We present \name{}, the first system to \rev{execute the entire cloud-side mmWave DSP and ML inference pipeline under FHE}. Raw range profiles are encrypted on the edge client \rev{after lightweight plaintext preprocessing (clutter removal and normalization)} using the Cheon-Kim-Kim-Song (CKKS) FHE scheme; the ciphertexts are sent to a \rev{semi-honest} cloud that executes the full pipeline and returns an encrypted result; the client decrypts only the authorized outputs, heart rate (HR) / respiration rate (RR) in BPM, or a gesture logit vector. The cloud never observes any plaintext value at any stage. \name{} provides two formal privacy guarantees \rev{that follow by construction from CKKS and its fixed-circuit design;} input privacy: the cloud learns nothing about the content of the encrypted sensor stream (Theorem~\ref{thm:input_privacy}) and data obliviousness: the execution trace is identical regardless of what is being sensed (\rev{Theorem}~\ref{thm:data_oblivious}).

The main challenge \rev{of designing \name{}} is that FHE frameworks support only three operations on encrypted data: addition, multiplication, and rotation \rev{with a fixed depth budget}. Typical mmWave sensing pipelines, however, rely on standard DSP routines that are incompatible with these operations: FFT operation chains many sequential multiplications, CFAR detection branches on adaptive thresholds, target detection and phase extraction rely on argmax, vital-sign recovery requires $\arctan$, ML inference depends on non-linear activations such as ReLU. For mmWave sensing, \rev{to our knowledge,} none of these has a direct FHE equivalent.
 
To address this, we \textit{redesign} each stage of mmWave sensing as a fixed arithmetic circuit using only FHE-compatible operations. We replace FFTs with plaintext matrix-vector products that achieve the same computation at a fraction of the multiplicative cost, replace CFAR and argmax with a branch-free soft power attention mechanism that processes every bin uniformly, and approximate $\arctan$ via a fixed-order Taylor-\rev{series} expansion. The result is a library of seven composable FHE-compatible DSP kernels: energy integration, soft power attention, DFT, soft I/Q extraction of highest-energy bin, FIR filtering, notch masking, and differential phase extraction. We verify that each kernel's output matches its plaintext counterpart, and provide formal proofs of input privacy (Theorem~\ref{thm:input_privacy}) and data obliviousness (\rev{Theorem}~\ref{thm:data_oblivious}) for the end-to-end system.
 
We summarize the contributions as follows:

\begin{itemize}

    \item \textbf{Encrypted radar \rev{sensing} pipeline.} \name{} is the first system to encrypt the entire mmWave DSP and ML inference pipeline, from raw range profiles through detection, feature extraction, and classification, on a \rev{semi-honest (honest-but-curious)} cloud. No intermediate representation is ever exposed in plaintext, and two formal guarantees are provided: input privacy (Theorem~\ref{thm:input_privacy}) and data obliviousness (\rev{Theorem}~\ref{thm:data_oblivious}).

    \item \textbf{FHE-friendly DSP kernel library.} We design seven \rev{data-oblivious} composable kernels: energy integration, soft power attention, DFT, soft I/Q extraction of highest-energy bin, FIR filtering, notch masking, and differential phase extraction; each fitting within a shared multiplicative depth budget without bootstrapping.

    \item \textbf{Experimental validation.} We evaluate on three public radar datasets (270 vital-sign recordings, 600 gesture trials). The encrypted pipeline achieves HR/RR MAE ${<}\,10^{-3}$\,bpm \rev{versus the plaintext pipeline,} with per-kernel encryption noise below $10^{-5}$, and 84.5\% gesture accuracy (vs.\ 84.7\% plaintext, 99.8\% prediction agreement). \rev{With our batched ciphertext packing strategy, end-to-end cloud latency on a single GPU is 1.21\,s for a 10\,s vital-signs window (faster than the sensor data rate) and 5.76\,s for a 3\,s gesture window, with 22\,MB of uplink and 0.84\,s of client-side encryption per vital-signs window on a Raspberry Pi 4.}
\end{itemize}
\section{Motivation}
\label{sec:motivation}

Consider a mmWave radar deployed for sleep stage classification. The inference pipeline is computationally heavy, so raw sensor data must be offloaded to the cloud. But the cloud now receives far more than what the task requires: the same raw stream encodes breathing patterns, heart rate, and biometric motion signatures that identify the individual, and nothing prevents the cloud from extracting all of it. We demonstrate this concretely on two public datasets, neither of which was collected for the inferences we show are possible.

\paragraph{User re-identification from gesture data.} On a 60\,GHz gesture dataset~\cite{zenodo_gesture} (12 users, 5 gesture types), a random-forest classifier trained on 186 features from the raw IQ stream can achieve \textbf{99.9\%} user re-identification accuracy within a session (temporal 70/30 split) and \textbf{78.6\%} across physically different gesture types (train on \{1,2,3\}, test on \{6,7\}). Thus, the user identity is encoded in the same raw signal features that encodes the gestures.

\paragraph{Unsupervised linkability.} A \rev{semi-honest} cloud may not have any labeled data to train, unlike the previous case. But the labels may not be necessary to make unauthorized biometric inferences: computing pairwise cosine distances over standardized features, without training any classifier, suffices to link recordings to the same person. On a \rev{60}\,GHz vital-sign dataset~\cite{children_dataset} of 50 children, this \rev{training-free} attack achieves AUC\,=\,0.981 (Fig.~\ref{fig:motivation}\rev{(a)}). On the gesture dataset, AUC\,=\,0.876. UMAP projections confirm that per-user clusters emerge without supervision (Fig.~\ref{fig:motivation}\rev{(b)}).

% \paragraph{Side-channel leakage from data-dependent DSP} Standard CFAR detection branches on whether each range bin exceeds an adaptive threshold; its execution time therefore varies with the scene. Across 50 children, CFAR timing correlates with detection count at $r{=}+0.94$ ($p < 10^{-23}$, 2000 trials per subject). Thus, the cloud operator can infer target count and approximate range from timing alone, acquiring unauthorized environment information from the raw data.

\paragraph{Execution trace leakage from data-dependent algorithms} Beyond data content, the execution trace of a sensing pipeline can itself reveal information without accessing any data value. Ohrimenko et al.~\cite{ohrimenko2016oblivious} show that ML algorithms leak sensitive information through their execution patterns: the same program runs differently depending on what data it processes, and anyone observing how long it takes or how much memory it accesses can infer the input. In a radar pipeline, a routine that processes more data when more people are in the room, or runs longer when a target is moving, exposes scene information through timing alone, without ever seeing the encrypted data. This is the leakage channel formalized in Proposition~\ref{prop:side_channel_leakage}.

\begin{figure}[t]
  \centering
  \includegraphics[width=1\columnwidth]{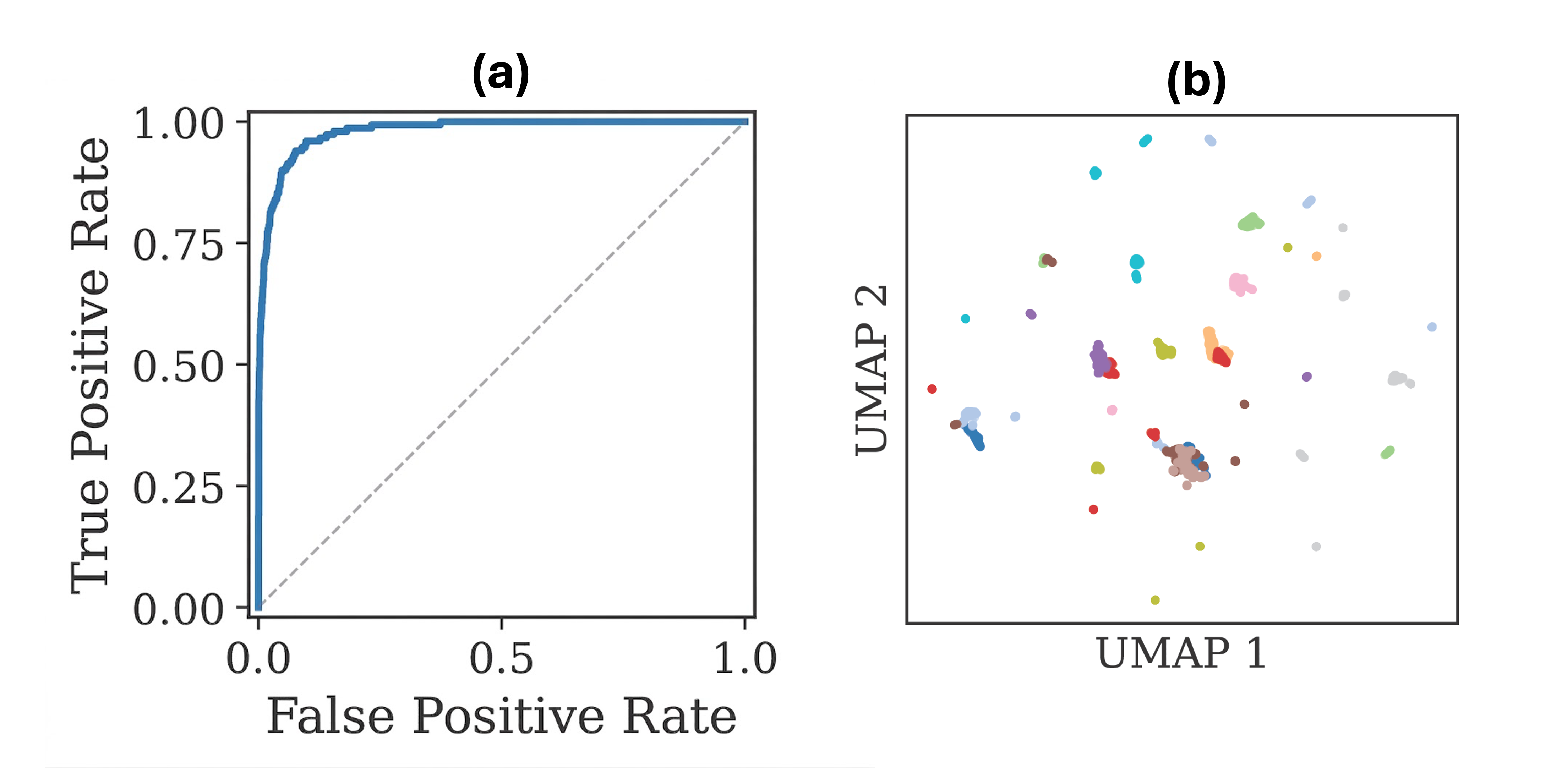}
  \caption{Unsupervised privacy leakage from raw mmWave data. \textbf{a)} Linkability ROC on 50 \rev{users~\cite{children_dataset}} (AUC\,=\,0.981). \textbf{b)} UMAP of gesture features (12 users) \rev{on \cite{zenodo_gesture} dataset}; per-user clusters form without labels. }
  \label{fig:motivation}
\end{figure}

These risks share the same root cause: processing raw data encodes more information than the intended application requires, and data-dependent DSP amplifies the privacy risks through observable execution behavior. We formalize these observations into the following propositions:

\begin{proposition}[Biometric Leakage \rev{from Plaintext} Radar Streams]
\label{prop:biometric_inseparability}
Let $\{\mathbf{z}[r, t]\}_{t=1}^{F}$ be a plaintext FMCW range-profile stream with carrier wavelength $\lambda$, slow-time sampling rate $f_s \geq 2 f_{\mu}$ (where $f_{\mu}$ is the highest micro-motion frequency of interest), and observation duration $F/f_s \geq T_{\min}$. Any protocol that grants a semi-honest server plaintext access to this stream, even for a limited task such as gesture recognition or presence detection, necessarily leaks sufficient information to reconstruct the target's biometric information (that is encoded in frequencies up to $f_{\mu}$) such as heartbeat, respiration, gait.
\end{proposition}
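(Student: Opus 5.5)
The argument is constructive: from the plaintext stream alone I will build an explicit estimator that recovers every micro-motion component with frequency at most $f_{\mu}$. Leakage then follows, since anything the server can compute from the data it holds is, by definition, accessible to it. I will model the complex range profile at the target's bin $r^\star$ as
\[
\mathbf{z}[r^\star,t] = A\, e^{j\left(\phi_0 + \frac{4\pi}{\lambda} d(t/f_s)\right)} + w[t],
\]
where $d(\tau)=d_{\mathrm{resp}}(\tau)+d_{\mathrm{heart}}(\tau)+d_{\mathrm{gait}}(\tau)+\dots$ is the chest or limb displacement and $w[t]$ is noise. The band-limitation assumption is that each biometric component has spectral content in $[0,f_{\mu}]$.

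\textbf{Step 1 (localization).} The server can identify $r^\star$ without any task-specific side information. I will take $r^\star=\arg\max_r \sum_t |\mathbf{z}[r,t]|^2$, which is the energy-integration step every radar pipeline performs. Any grant of plaintext access already permits this, including access intended only for gesture recognition or presence detection.

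\textbf{Step 2 (phase demodulation).} Compute $\psi[t]=\operatorname{unwrap}(\angle \mathbf{z}[r^\star,t])$. This yields $\frac{4\pi}{\lambda}d(t/f_s)$ up to a constant and noise. Unwrapping is valid provided inter-sample displacement stays below $\lambda/4$, a mild condition that the hypothesis $f_s\ge 2f_{\mu}$ with physiologically bounded amplitudes should ensure. Multiplying by $\lambda/4\pi$ gives samples of $d$.

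\textbf{Step 3 (sampling).} Because $d$ is band-limited to $f_{\mu}\le f_s/2$, the Nyquist--Shannon theorem gives exact reconstruction of $d(\tau)$ from these samples, up to noise. Band-pass filtering then separates respiration (about $0.1$--$0.5$\,Hz), heartbeat (about $0.8$--$2$\,Hz) and gait micro-Doppler.

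\textbf{Step 4 (resolution).} The duration condition $F/f_s\ge T_{\min}$ gives frequency resolution $f_s/F\le 1/T_{\min}$. I will choose $T_{\min}$ so that this resolution separates the heart and respiration bands and resolves rates to the desired precision.

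The main obstacle is that the statement is informal. \emph{Sufficient information to reconstruct} needs a precise meaning, and noise and phase unwrapping need handling. I would make it precise as follows: there exists a polynomial-time function $g$ of the stream whose output equals the band-limited displacement within an error governed by the SNR. Equivalently, the mutual information between the stream and the biometric parameters is positive and grows with $F$. I would state the unwrapping amplitude bound as an explicit side assumption. The identification of \emph{biometric information} with the displacement signal itself I would treat as a modeling premise, supported empirically by the re-identification and linkability results of Section~\ref{sec:motivation}.
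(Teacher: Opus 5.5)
Your proposal is correct and follows essentially the same route as the paper's proof in Appendix~\ref{sec:app_A}: the phase is read directly off the plaintext complex samples, the linear map $\Delta\varphi = 4\pi\Delta d/\lambda$ is inverted to recover displacement, and Shannon--Nyquist at $f_s \ge 2f_{\mu}$ recovers all components up to $f_{\mu}$. Your explicit localization step and the per-sample unwrapping condition $|\Delta d| < \lambda/4$ are refinements the paper silently assumes, and you should keep your softened formalization (an estimator with SNR-governed error rather than exact reconstruction), because a finite record of $F$ samples with the non-strict bound $f_s \ge 2f_{\mu}$ does not give literal exact reconstruction.
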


% \begin{proposition}[Inseparability Property and Biometric Leakage from Raw Radar Data]
% \label{thm:biometric_inseparability}
% Raw coherent radar streams encode sub-wavelength micro-motion signatures---heartbeat, respiration, gait---that uniquely identify observed individuals. Any protocol that grants a semi-honest server plaintext access to these streams, even for a limited task such as gesture recognition or presence detection, necessarily exposes sufficient information to extract biometric identifiers and vital signs of every target in the scene.
% \end{proposition}

\begin{proposition}[Side-Channel Leakage from Data-Dependent DSP]
\label{prop:side_channel_leakage}
Standard radar DSP routines (e.g., CFAR detection, Kalman filtering) contain data-dependent control flow that branches on input values. A semi-honest server can distinguish radar scenes from the observed execution trace alone, without accessing any data value.
\end{proposition}

\name{} addresses both by encrypting raw radar data before it leaves the client and executing the entire \rev{sensing} pipeline as fixed arithmetic circuits on ciphertexts.

\section{\name{}}
\label{sec:system}
 
\name{} enables privacy preserving mmWave sensing with end-to-end fully homomorphic encryption (FHE). \revdel{It starts by encrypting raw mmWave complex range profiles on the edge and sending the encrypted data to cloud, which executes the entire DSP and inference pipeline on encrypted data.} Figure~\ref{fig:arch} illustrates the end-to-end architecture: the client encrypts the raw complex range profile radar output; the cloud chains a library of FHE-compatible kernels over encrypted radar data and sends back the encrypted result; the client then decrypts only the authorized scalar outputs. The cloud never observes any unencrypted intermediate results at any stage. 
 
\label{subsec:architecture}
% \begin{figure*}[t]
%     \centering
%     \resizebox{0.95\textwidth}{!}{\input{figures/architecture_diagram.tex}}
%     \caption{\name{} architecture. The client performs lightweight
%     pre-processing and encryption; the cloud executes the full
%     encrypted DSP and inference pipeline using composable kernels; the
%     client decrypts only authorized scalar outputs. Two application
%     compositions are shown: vital signs (left path) and ML inference
%     (right path). The secret key $sk$ is shared only between the
%     trusted client endpoints.}
%     \label{fig:arch}
% \end{figure*}

\begin{figure*}[t]
    \centering
    \includegraphics[width=0.95\linewidth]{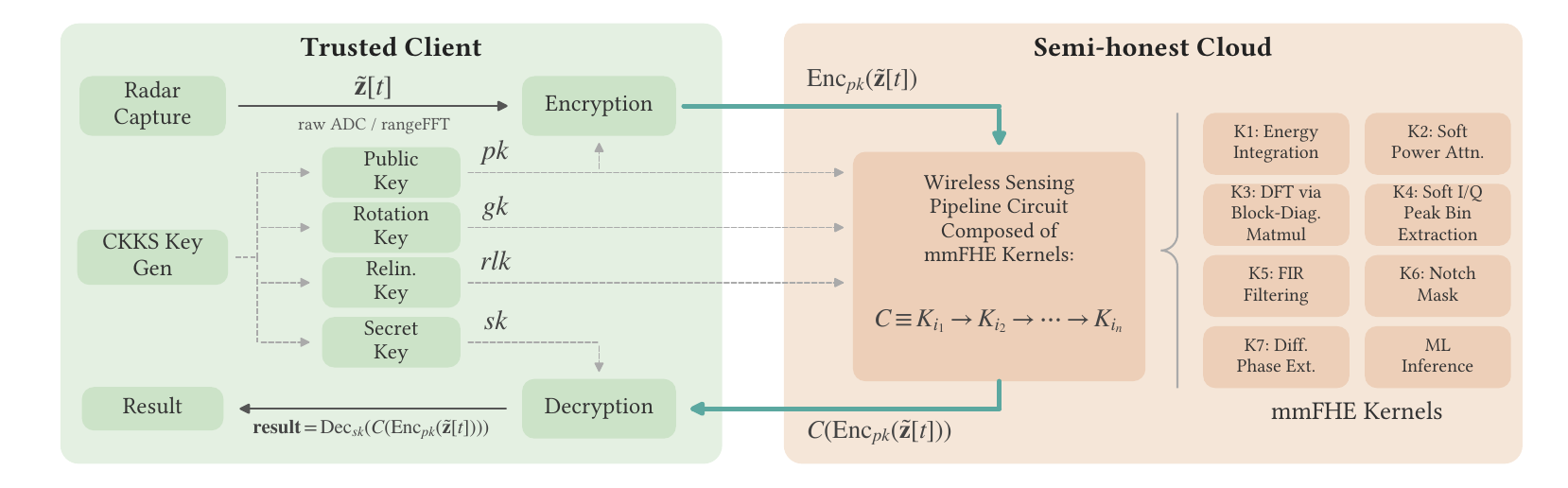}
    \caption{\name{} architecture. The trusted edge client encrypts raw range-FFT profiles under CKKS and sends ciphertexts to a \rev{semi-honest} cloud, which executes the entire DSP and ML inference pipeline using composable \rev{data-oblivious} FHE kernels (K1--K7) over encrypted data. The cloud never observes any plaintext value. The client decrypts only authorized outputs (HR/RR BPM or waveform, or classification logits). Legend: black arrows = unencrypted plaintext, teal arrows = encrypted ciphertext, dashed arrows = keys.}
    \label{fig:arch}
\end{figure*}

\subsection{Threat Model}
\label{subsec:threat_model}
 
\noindent\textbf{Adversary.} We \rev{assume a} semi-honest (honest-but-curious) \rev{cloud adversary}: the cloud faithfully executes a provided protocol but may analyze all observed inputs, intermediate values, and execution timing. Two independent leakage channels arise: (1) content leakage from plaintext data access, and (2) side-channel leakage from data-dependent execution traces. The \rev{cloud} adversary is computationally bounded and cannot break the Ring \rev{Learning with Errors (R}LWE) assumption (\revdel{a  cryptographic hardness assumption where }i.e. it is computationally infeasible to distinguish noisy, structured polynomial equations from truly random ones).
 
\noindent\textbf{Trust boundary.} The client (edge radar device) is trusted: it generates cryptographic keys, holds the secret key, enforces output disclosure policies, and encrypts radar output into ciphertexts. The cloud receives only ciphertexts and public evaluation keys.
 
\noindent\textbf{Security goals.} \name{} \rev{must} provide two formal guarantees. (1)~\textit{Input privacy}: the cloud learns nothing about the content of the encrypted sensor stream (Theorem~\ref{thm:input_privacy}: input is indistinguishable from noise) and (2)~\textit{Data obliviousness}: the execution trace is identical regardless of what is being sensed (\rev{Theorem}~\ref{thm:data_oblivious}). Proofs appear in \S\ref{sec:security}.
 
\noindent\textbf{Assumptions and limitations.} \revdel{\name{} assumes the client device is trusted and physically secure; \rev{side-channel} attacks on the client hardware are out of scope. }
\rev{(1)~\emph{Malicious computation:}} The system does not protect against a malicious cloud that deviates from the protocol (e.g., not running the provided \rev{circuit} faithfully or returning values not computed on the original client's data), only semi-honest adversaries are considered. \rev{(2)~\emph{Metadata:} } \rev{network-level observables} such as \rev{ciphertext counts, sizes, and the client's network identity are visible to the adversary.} \rev{(3)~\emph{Session timing:}} The timing and frequency of sensing sessions are not protected. \rev{(4)~\emph{Authorized outputs:} whatever the trusted client decrypts is disclosed by definition; e.g., a decrypted HR/RR waveform may expose the same signal that enables the linkage attack in Figure~\ref{fig:motivation}.}

\subsection{Homomorphic Encryption Primer}
\label{subsec:he_primer}
 
\revdel{In \name{}, the client encrypts its input, sends the ciphertext to the server, and the server \rev{computes and} returns an encrypted result that only the client can decrypt,}\name{} is based on the CKKS FHE scheme~\cite{cheon2017homomorphic}. CKKS is designed for approximate arithmetic on real and complex \rev{non-integer} numbers, suitable for mmWave sensing. CKKS security relies on the RLWE problem~\cite{lyubashevsky2010ideal}, \rev{and is} believed to resist both classical and quantum attacks. We follow the HE Standard~\cite{albrecht2022homomorphic} to achieve $\geq 128$-bit security.

\noindent\textbf{Key Generation.} The client runs $\mathsf{KeyGen}$ once to produce four artifacts: a secret key $sk$, a public encryption key $pk$, a relinearization key $rlk$, and a set of Galois keys $gk$. The secret key $sk$ is used for decryption and never leaves the client. The public key $pk$ encrypts plaintext vectors into ciphertexts. The relinearization key $rlk$ is required after every ciphertext-ciphertext multiplication to reduce the ciphertext back to two polynomials; without it, ciphertext size grows with each multiplication operation. Galois keys $gk$ enable slot rotations; one key is needed per distinct rotation amount $k$. The client sends $(pk, rlk, gk)$ to the cloud along with the encrypted data. The cloud uses these keys during homomorphic evaluation but cannot decrypt any result \rev{because it does not possess $sk$}.

\noindent\textbf{Encoding and Batching.} A single ciphertext packs up to $N/2$ real values into parallel slots, where $N$ is a CKKS parameter. All arithmetic applies slot-wise, so one ciphertext operation processes up to $N/2$ values simultaneously. \revdel{With $N{=}32768$, a single encrypted add or multiply acts on 16{,}384 values at once.} This single instruction, multiple data (SIMD) batching makes encrypted radar DSP practical: an entire range profile or Doppler spectrum fits in one ciphertext.

\noindent\textbf{Three Primitive Operations.} All encrypted computation is built from three operations. \textit{Addition:} $\mathsf{Enc}(\mathbf{a}) + \mathsf{Enc}(\mathbf{b}) = \mathsf{Enc}(\mathbf{a}+\mathbf{b})$, slot-wise, free in depth. \textit{Multiplication:} $\mathsf{Enc}(\mathbf{a}) \cdot \mathsf{Enc}(\mathbf{b}) = \mathsf{Enc}(\mathbf{a} \odot \mathbf{b})$, slot-wise, consumes one depth level. \textit{Rotation:} $\mathsf{Rot}(\mathsf{Enc}(\mathbf{a}), k)$ cyclically shifts the slot vector by $k$ positions, free in depth but requires $gk$. Non-linear functions such as argmax comparison, and $\arctan$ have no direct \rev{FHE} counterpart and must be approximated as polynomial circuits subject to a depth budget $L$. 
 
\noindent\textbf{Depth Budget.} Every CKKS ciphertext carries a finite multiplicative depth budget $L$. Each ciphertext-ciphertext multiplication consumes one level; after $L$ multiplications the ciphertext cannot be used further. Addition is free. The entire DSP pipeline must be expressed as a circuit with at most $L$ sequential multiplications. Deeper circuits require either a larger $N$ or bootstrapping, a costly operation that refreshes the depth budget. \name{} pipelines are designed to fit within their budgets without bootstrapping.

%%%%%%%%%%%%%%%%%%%%%%%%%%%%%%%%%%%%%%%%%%%%%%%%%%%%%%%%%%%%%%%%%%%%
%% Section: System Design
%%%%%%%%%%%%%%%%%%%%%%%%%%%%%%%%%%%%%%%%%%%%%%%%%%%%%%%%%%%%%%%%%%%%
\subsection{System Design}
\label{sec:system_design}
\label{sec:pipeline}
% The vital signs pipeline requires two lightweight decryptions 
% (target-bin index and
% HR/RR BPM values); the classification pipeline requires a single
% logit-vector decryption. All decryptions occur on the trusted client.

%-------------------------------------------------------------------
\subsubsection{Overview}

\revdel{\name{} prevents both biometric and side-channel leakage, and breaks the inseparability identified in Prop.~\ref{prop:biometric_inseparability} and \ref{prop:side_channel_leakage}, by encrypting outgoing radar streams on the client and executing the entire signal-processing and inference chain over ciphertexts on the cloud.} Our key contribution is a set of data-oblivious, FHE-compatible mmWave sensing kernels that enable the cloud to execute standard DSP-like algorithms without ever accessing plaintext data. Because the cloud never observes plaintext values, \name{} prevents the biometric leakage in Prop.~\ref{prop:biometric_inseparability}. Moreover, data obliviousness closes both the content leakage and side-channel leakage identified in \S\ref{sec:motivation}. These kernels can be composed into complete mmWave sensing pipelines: the cloud chains them over ciphertexts, and the trusted client decrypts only the authorized scalar outputs. \revdel{Figure~\ref{fig:arch} summarizes the system architecture, which consists of two main components:} \name{} has two main actors (Fig. ~\ref{fig:arch}).

\noindent (1) Trusted \textbf{Client} does the following:

      \noindent\textit{Pre-encryption stage:} after receiving the range FFT from the mmWave sensor, removes clutter and normalizes the data.\\
      \noindent\textit{Encryption stage:} generates CKKS keys $(pk, sk, rlk, gk)$. The secret key $sk$ never leaves the trusted client device(s). The client encrypts the normalized data using $pk$, packs the encrypted data, and sends the ciphertext \revdel{together} with $rlk, pk, gk$ to the cloud. \\
      \noindent\textit{Decryption and output stage:} after receiving encrypted results from the cloud, decrypts them using $sk$ and recovers only the authorized outputs, such as HR/RR waveforms, HR/RR BPM values, a target-bin index, or the logit vector for classification.

\noindent (2) Semi-honest \textbf{Cloud} receives the ciphertext and $rlk, pk, gk$ from the client. It selects and chains kernels from the \name{} library according to the target application. Each kernel is a fixed arithmetic circuit, and \revdel{some kernels require $gk$ to} operate on ciphertexts. The cloud returns the encrypted result to the client and never observes any plaintext value.

%-------------------------------------------------------------------
\subsubsection{Pre-Cloud Client-Side Algorithms}
\label{subsec:client_preprocessing}

\revdel{The pre-cloud client-side algorithms in \name{}} These depend on the type of radar stream available at the client: (i) raw I/Q data, (ii) range-FFT data, or (iii) highly processed lossy data. Since most commercial, non-research-grade radars directly provide a range-FFT-like stream, we assume type (ii) in this paper. However, as will become clear shortly, \name{} can also be integrated with type (i) radars.

\noindent\textbf{Preprocessing.}
We start from the range FFT (Eq.~\ref{eq:range_fft}) for each antenna and chirp across $R$ range bins, and apply clutter removal through mean subtraction:
$\tilde{\mathbf{z}}[t] = \mathbf{z}[t] -
\frac{1}{F}\sum_{l}\mathbf{z}[l] $. The resulting values are then scaled to a numerically stable range for FHE.
% frame-local normalization for vital signs
% (the scalar cancels in differential phase), batch-global for gestures. 

% The lowest range bins are zeroed to suppress near-field
% leakage. 
% This is the only FFT in the pipeline; all subsequent
% spectral operations are reformulated as matrix-vector products
% executable under CKKS.

\noindent\textbf{Encryption.}
 CKKS operates on real-valued slot vectors, \rev{so} complex radar data \rev{frame} $\tilde{\mathbf{z}}[t] = \mathbf{v}^{(\mathrm{re})}\rev{[t]}+j\,\mathbf{v}^{(\mathrm{im})}\rev{[t]}$ 
is split into separate real and imaginary ciphertexts: $\hat{\mathbf{v}}\rev{[t]}^{(\mathrm{re})} = \mathsf{Enc}_{pk}(\mathbf{v}\rev{[t]}^{(\mathrm{re})})$, $\hat{\mathbf{v}}\rev{[t]}^{(\mathrm{im})} = \mathsf{Enc}_{pk}(\mathbf{v\rev{[t]}}^{(\mathrm{im})})$; where \rev{the ciphertext pair $(\hat{\mathbf{v}}\rev{[t]}^{(\mathrm{re})}, \hat{\mathbf{v}}\rev{[t]}^{(\mathrm{im})})$ constitutes one} encrypted frame. 

\rev{\noindent\textbf{Packing.}}
The ciphertext packing \rev{(i.e., how frames are laid out across slots, and how many frames share one ciphertext)} can be simple \rev{per-frame} or \rev{batched}, depending on the downstream application. \\
\rev{\emph{Per-frame:} each frame occupies its own ciphertext pair. Within a frame, the layout follows the application. For} vital-signs, each frame is packed across $R$ range bins into the first $R$ slots. For Doppler pipelines (e.g., dynamic classification), to reduce the number of rotations required in the later stages, per-frame data from $A$ antennas $\times$ $R$ range bins $\times$ $D$ chirps \rev{can be} interleaved using $\text{slot}[a \cdot RD + r \cdot D + c]$, yielding $ARD$ active slots. This layout places each antenna-range group's $D$ chirps in contiguous slots, enabling the block-diagonal Doppler DFT to operate via slot rotations within each $D$-element block.
\rev{However, in this packing strategy each ciphertext uses only a small fraction of its slots (both $R,\,ARD \ll N/2$, yielding poor slot utilization) and the ciphertext count grows linearly with the window length. \\
\emph{Batched:} multiple frames are packed into one ciphertext pair, to maximize slot utilization. Because homomorphic operations are SIMD, packing $k$ frames per ciphertext divides the number of ciphertext operations for the element-wise stages by $k$, and reduces the ciphertext count from $O(F)$ to $O(F/k)$, cutting client encryption time and uplink volume by the same factor. In exchange, combining frames now costs slot rotations within a ciphertext instead of rotation-free additions across ciphertexts.}

%-------------------------------------------------------------------
\subsubsection{Cloud-Side Algorithms}
\label{subsec:cloud_processing}

The cloud receives ciphertexts and public parameters from the client, and evaluates application specific signal-processing kernels directly on encrypted data. In a conventional plaintext sensing pipeline, these operations include power computation, highest-energy-bin selection, phase extraction, CFAR, FIR filtering, FFT computation, frequency estimation, and related steps. A primer on conventional FMCW signal-processing algorithms is provided in Appendix~\ref{subsec:fmcw_primer}.

The main challenge is that many standard DSP algorithms are not directly compatible with FHE. In particular, FHE does not support data-dependent branching or operations that cannot be expressed as compositions of addition, multiplication, and rotation. As a result, a plaintext sensing pipeline cannot be ported to FHE by simply replacing plaintext values with ciphertexts.

To address this gap, \name{} provides a library of cloud-side kernels that together replicate the functionality of a plaintext DSP pipeline over ciphertexts. For operations that are naturally compatible with FHE, such as DFT and FIR filtering, \name{} redesigns them to execute efficiently under homomorphic constraints. For operations that are not directly compatible with FHE, such as highest-energy-bin selection, phase extraction, and CFAR, \name{} constructs FHE-friendly approximations that preserve the intended functionality. These kernels can be used individually or composed into complete sensing pipelines.

All \name{} kernels are data-oblivious: they execute the same sequence of additions, multiplications, and rotations regardless of the encrypted input. As a result, the cloud-side execution trace does not reveal information about the underlying radar scene. We describe these kernels below.

%- - - - - - - - - - - - - - - - - - - - - - - - - - - - - - - - - -
\paragraph{Kernel 1: Energy Integration}
\label{subsec:energy}
This kernel computes the per-bin signal power. For each range bin $r$, the cloud evaluates:
\begin{equation}
  E_r \;=\; \sum_{t=1}^{F}
  \Bigl(\mathrm{Re}(\tilde{z}_r[t])^2 + \mathrm{Im}(\tilde{z}_r[t])^2\Bigr).
  \label{eq:energy}
\end{equation}

This is the standard energy accumulation step used to identify bins with consistently strong reflections over time. Under FHE, the cloud computes $E_r$ on a fresh ciphertext, which consumes one multiplicative level, and then accumulates the result across frames using homomorphic additions, which are depth-free. The output is a single packed ciphertext containing the energies of all $R$ range bins.

%- - - - - - - - - - - - - - - - - - - - - - - - - - - - - - - - - -
\paragraph{Kernel 2: Soft Power Attention}
\label{subsec:soft_argmax}
This kernel replaces standard radar CFAR thresholding or hard argmax in \name{} with a branch-free polynomial attention mechanism that processes every bin uniformly. Given an energy profile $E_r$, \name{} computes the power-weighted statistics:
\begin{equation}
  w_r = E_r^{\gamma}, \label{eq:soft_power_weight}
  \qquad 
  num = \sum_{r=0}^{R-1} r\,w_r,
  \qquad 
  den = \sum_{r=0}^{R-1} w_r,
  % \label{eq:soft_argmax_stats}
\end{equation}
where the sharpening exponent $\gamma$ controls how strongly the weights concentrate on dominant bins. When $\gamma{=}1$, the weighted statistics reduce to an energy centroid; as $\gamma \to \infty$, the weight concentrates on the maximum-energy bin, approaching hard argmax. This enables a first-moment estimate of the target bin, $\hat{r} = num/den$.

This formulation is easily implemented in FHE because it avoids data-dependent branching. The weights $w_r$ are computed through sequential square operations, each consuming depth~1; larger $\gamma$ yields sharper selection at the cost of higher depth. In our experiments, $\gamma{=}2$ is sufficient for range-energy profiles, while $\gamma{=}4$ provides sharper selection when applied to higher-dimensional sensing maps such as range-angle maps, and range-Doppler maps.

%- - - - - - - - - - - - - - - - - - - - - - - - - - - - - - - - - -
% tmp_kernel_dft_block_diagonal_matmul.tex

\paragraph{Kernel 3: DFT (via Block-Diagonal Matmul)}
\label{subsec:doppler_dft}

The DFT is a core operation in radar sensing, used to transform temporally or spatially organized measurements into Doppler or angle-domain representations. In plaintext systems, this transform is typically implemented using FFT butterflies. Under FHE, however, each butterfly stage consumes one multiplicative depth level, making the standard FFT realization inefficient. To reduce this cost, \name{} instead expresses the windowed, shifted DFT as a plaintext matrix-vector multiplication, which decomposes into additions, multiplications by plaintext constants, and rotations while consuming only a single depth level. This matrix-vector product is implemented using the Halevi-Shoup diagonal method~\cite{halevi2014algorithms} with baby-step/giant-step (BSGS) scheduling, which groups and reuses rotations to substantially reduce the number of ciphertext rotations and required rotation keys in practice.

The windowed DFT kernel ($D \times D$) is:
\begin{equation}
  \mathbf{W}=W_{d,n} = w[n] \cdot e^{-j\frac{2\pi}{D}\sigma(d) \cdot n},
  \quad d,n \in [D]
  \label{eq:dft_kernel}
\end{equation}
where $w[n]$ is the Hanning window and $\sigma$ is the
\texttt{fftshift} permutation ($\sigma(d) = (d+D/2) \bmod D$).
Splitting $\mathbf{W}$ into real and imaginary parts, $\mathbf{C} = \mathbf{W}^{(\mathrm{re})}$ and $\mathbf{S} = \mathbf{W}^{(\mathrm{im})}$, and then tiling them block-diagonally $AR$ times to match the slot layout, gives:
\begin{equation}
  \hat{\mathbf{d}}^{(\mathrm{re})} =
    \tilde{\mathbf{C}} \cdot \hat{\mathbf{v}}^{(\mathrm{re})}
    - \tilde{\mathbf{S}} \cdot \hat{\mathbf{v}}^{(\mathrm{im})},
  \,
  \hat{\mathbf{d}}^{(\mathrm{im})} =
    \tilde{\mathbf{S}} \cdot \hat{\mathbf{v}}^{(\mathrm{re})}
    + \tilde{\mathbf{C}} \cdot \hat{\mathbf{v}}^{(\mathrm{im})},
  \label{eq:dft_re}
\end{equation}
where $\tilde{\mathbf{C}} = \mathbf{I}_{AR} \otimes \mathbf{C}$ and $\tilde{\mathbf{S}} = \mathbf{I}_{AR} \otimes \mathbf{S}$ are the block-diagonal replications. This makes the DFT practical under FHE while preserving the same linear transform.

%- - - - - - - - - - - - - - - - - - - - - - - - - - - - - - - - - -
\paragraph{Kernel 4: Soft I/Q Extraction of Highest-Energy Bin}
\label{subsec:phase}
In vital-sign sensing, phase is typically extracted from the range bin with the highest energy. Under FHE, however, hard bin selection is not possible because it requires data-dependent control flow. \name{} therefore replaces this step with a soft, branch-free alternative. For each frame $t$, the cloud computes a per-frame energy mask that concentrates on the highest-energy range bin and uses it to extract soft-weighted I/Q scalars:
\begin{align}
  m_r[t] &= [\mathrm{Re}(\tilde{\mathbf{z}}_r[t])^2 + \mathrm{Im}(\tilde{\mathbf{z}}_r[t])^2]^{P_\phi},
  \label{eq:phase_mask} \\
  I[t] &= \sum_{r=0}^{R-1} m_r[t] \cdot \mathrm{Re}(\tilde{\mathbf{z}}_r[t]),
  \, 
  Q[t] = \sum_{r=0}^{R-1} m_r[t] \cdot \mathrm{Im}(\tilde{\mathbf{z}}_r[t]).
  \label{eq:phase_iq}
\end{align}
As $P_\phi$ increases, the mask becomes more concentrated on the dominant range bin, making $(\mathbf{I}, \mathbf{Q})=(I[t], Q[t])$ approach the I/Q values of the highest-energy bin. The resulting encrypted outputs remain in ciphertext form and feed directly into subsequent kernels without intermediate decryption. In our experiments, $P_\phi=2-5$ is sufficient.

%- - - - - - - - - - - - - - - - - - - - - - - - - - - - - - - - - -
\paragraph{Kernel 5: FIR Filtering}
\label{subsec:fir}
FIR (finite impulse response) filtering is easily translated to FHE because it is purely feedforward: each output is a weighted sum over a fixed input window, with no data-dependent control flow. For each pass-band $b$, the filtered I/Q signals are expressed as:
\begin{equation}
  \mathbf{I}^{(b)}_{f} = \mathbf{T}^{(b)} \mathbf{I},
  \qquad
  \mathbf{Q}^{(b)}_{f} = \mathbf{T}^{(b)} \mathbf{Q},
  \label{eq:fir_iq}
\end{equation}
where $\mathbf{T}^{(b)}$ is the Toeplitz matrix formed from the plaintext filter taps of pass-band $b$. This requires only plaintext-ciphertext multiplication together with additions and rotations.

%- - - - - - - - - - - - - - - - - - - - - - - - - - - - - - - - - -
\paragraph{Kernel 6: Notch Masking}
\label{subsec:notch}
For static clutter removal in the range-Doppler domain under FHE, \name{} applies a fixed binary plaintext mask:
\begin{equation}
  m[d] = \begin{cases}
    0, & d \in [\lfloor D/2 \rfloor, \lceil D/2 \rceil) \\
    1, & \text{otherwise}
  \end{cases}
  \label{eq:notch_mask}
\end{equation}
This mask suppresses the near-zero Doppler region associated with static clutter while preserving all other bins. The mask is a public constant, so it can be realized as a single depth-free plaintext-ciphertext multiplication.

%- - - - - - - - - - - - - - - - - - - - - - - - - - - - - - - - - -
\paragraph{Kernel 7: Differential Phase Extraction (via Taylor Series Approximation)}
\label{subsec:taylor_arctan}
To extract phase under FHE, \name{} replaces $\arctan(Q/I)$ with a low-order Taylor approximation that operates directly on consecutive filtered samples. Given $y[t]$ and $x[t]$ defined as follows:
\begin{align}
  y[t] &= \mathbf{Q}_f[t]\,\mathbf{I}_f[t{-}1] - \mathbf{I}_f[t]\,\mathbf{Q}_f[t{-}1],
  \nonumber \\
  x[t] &= \mathbf{I}_f[t]\,\mathbf{I}_f[t{-}1] + \mathbf{Q}_f[t]\,\mathbf{Q}_f[t{-}1],
  \nonumber \\
  \Delta\phi[t] &\approx y[t]\,x[t]^2 - \tfrac{1}{3}\,y[t]^3,
  \label{eq:taylor_arctan}
\end{align}
\name{} uses a third-order Taylor expansion of $\arctan(y/x)$, which requires three ciphertext-ciphertext multiplicative levels. This approximation avoids division and inverse trigonometric evaluation, both of which are expensive under FHE.

For low-SNR scenarios where $|\Delta\phi| \ll 1$, a first-order approximation $\Delta\phi \approx y$ is often sufficient, trading accuracy for a shallower circuit.

%- - - - - - - - - - - - - - - - - - - - - - - - - - - - - - - - - -
\paragraph{Encrypted ML Inference}
\label{subsec:mlp}
Following standard practice for encrypted neural inference~\cite{wu2025rprs,kim2022secure}, \name{} uses a fully connected (FC) network with square activations, $\sigma(x) = x^2$. The trained weights $\{\mathbf{W}^{(\ell)}, \mathbf{b}^{(\ell)}\}$ are provided to the cloud as public plaintext parameters. For each layer $\ell = 1, \ldots, L_{FC}$:
\begin{equation}
  \mathbf{h}^{(\ell)} =
  \begin{cases}
    \bigl(\mathbf{W}^{(\ell)} \mathbf{h}^{(\ell-1)} +
    \mathbf{b}^{(\ell)}\bigr)^{\circ 2}
    & \ell < L_{FC},\\[4pt]
    \mathbf{W}^{(L_{FC})} \mathbf{h}^{(L_{FC}-1)} + \mathbf{b}^{(L_{FC})}
    & \ell = L_{FC},
  \end{cases}
  \label{eq:mlp_forward}
\end{equation}
where $(\cdot)^{\circ 2}$ denotes element-wise squaring.

\noindent\textbf{Weight folding.}
The soft-power normalization factor $s^\gamma / F$ and the column permutation that reorders the encrypted slot layout to match the training feature order are both folded into $\mathbf{W}^{(1)}$ at setup time. These are plaintext operations on the weight matrix and therefore incur zero depth cost. Here, $s = R \cdot A \cdot (\sum_n w[n])^2$ denotes the expected spectral energy scale. {Depth consumed: $2L_{FC} - 1$.}

%-------------------------------------------------------------------
\subsubsection{Post-Cloud Client-Side Result Recovery}
\label{subsec:result_recovery}

After cloud execution, the client selectively decrypts only the
outputs authorized for the target application. The client decrypts the returned ciphertexts results using $sk$: $$\mathsf{Dec}_{sk}(\mathsf{Enc}_{pk}(\text{\textit{result}}))$$ for \textit{result $\in$ $\{$HR, heart waveform, RR, respiration waveform, target bin N/D, ML logits$\}$}. No post-processing beyond scalar division and plaintext argmax are required.

% \medskip\noindent\textbf{Selective disclosure.} The client holds the
% sole decryption key and chooses which outputs to decrypt. A
% vital-sign monitor decrypts only $\hat{r}$ and the HR/RR BPM
% scalars; a classification client decrypts only the logit vector. All
% other information---phase signatures, Doppler spectra, intermediate
% features---remains cryptographically sealed in ciphertexts the client
% never decrypts.

%-------------------------------------------------------------------

%-------------------------------------------------------------------
\subsubsection{Example Pipelines}
\label{subsec:pipeline_vitals}
We demonstrate \name{} on two standard radar pipelines. The underlying plaintext algorithms follow Alizadeh et al.~\cite{alizadeh2019remote} for vital-signs and Li et al.~\cite{li2022di} for gesture recognition.

\noindent\textbf{Encrypted Vital Signs.}
The vital-signs pipeline composes K1$\to$K2$\to$K4$\to$K5$\to$K7$\to$K3$\to$K1$\to$K2. K1 computes per-bin energy, and K2 localizes the target bin, yielding the first decrypted output, $\hat{r}=num/den$. K4 extracts soft I/Q samples from the selected range region, K5 filters them into the respiration (0.1--0.6\,Hz) and heart-rate (0.8--2.5\,Hz) bands, and K7 recovers differential phase. Over the resulting phase trace, K3 computes a narrowband DFT, K1 computes the power spectrum $|X[k]|^2$, and K2 performs spectral sharpening and weighted frequency averaging to produce the second decrypted output, HR/RR in BPM. The total circuit depth is 11 with the third-order Taylor approximation, or 9 with the first-order approximation \rev{(Tab.~\ref{tab:depth})}. Optionally, the encrypted phase waveform after K5 can also be returned. \revdel{for client-side visualization. See Table~\ref{tab:depth} for the full depth accounting.}

\noindent\textbf{Encrypted Dynamic Radar Classification}
\label{subsec:pipeline_gesture}
The classification pipeline processes $F$ frames and composes K3$\to$K1$\to$K2 + K6 before a final encrypted FC pass. For each frame, K3 computes the encrypted range-Doppler spectrum, K1 computes its power map, and K2 together with K6 applies soft power attention with notch masking ($\gamma{=}4$) to produce weighted features. These features are accumulated across frames through homomorphic addition. The final accumulated vector is then passed through the encrypted ML network:
$\mathrm{FC}_1 \to (\cdot)^{\circ 2} \to \mathrm{FC}_2 \to (\cdot)^{\circ 2} \to \mathrm{FC}_3$. The total circuit depth is 11 \rev{(Tab.~\ref{tab:depth})}. The client decrypts the 5-class logit vector and applies plaintext argmax. \revdel{See Table~\ref{tab:depth}.}

\label{subsec:depth_budget}
\label{sec:pipeline:params}

\begin{table}[t]
  \centering
  \caption{Multiplicative depth accounting for both \name{} pipelines.
  Each pipeline encrypts the input into fresh depth-0 ciphertexts.}
  \label{tab:depth}
  \footnotesize
  \setlength{\tabcolsep}{3pt}
  \begin{tabular}{@{}lccp{0.30\columnwidth}@{}}
    \toprule
    \textbf{Stage} & \textbf{Depth} & \textbf{$\Sigma$}
      & \textbf{Operation} \\
    \midrule
    \multicolumn{4}{@{}l}{\textit{Pipeline 1: Vital Signs (depth 11, $N{=}32,768$)}} \\
    \;\;Energy integ. & 1 & 1 & $\mathrm{Re}^2{+}\mathrm{Im}^2$ \\
    \;\;Soft attention & 1 & 2 & $E_r^2$ ($\gamma{=}2$) {\scriptsize$[\to \hat{r}]$} \\
    \;\;Phase extr. & 1 & 3 & $(|z|^2)^2{\cdot}z$, sum \\
    \;\;FIR filter & 1 & 4 & pt-ct tap accumulation \\
    \;\;Taylor arctan & 3 & 7 & $y x^2{-}y^3/3$ (3rd-order) \\
    \;\;Window + DFT & 1 & 8 & pt-ct inner product \\
    \;\;$|X|^2$ & 1 & 9 & $\mathrm{Re}^2{+}\mathrm{Im}^2$ \\
    \;\;Merge + sharpen & 1 & 10 & Merge + squaring \\
    \;\;Wt.\ freq.\ avg. & 1 & 11 & pt-ct + sum {\scriptsize$[\to \mathrm{BPM}]$} \\
    \midrule
    \multicolumn{4}{@{}l}{\textit{Pipeline 2: Dynamic Classification (depth 11, $N{=}32,768$)}} \\
    \;\;Doppler DFT & 1 & 1 & BSGS block-diag matmul \\
    \;\;$|z|^2$ & 1 & 2 & $\mathrm{Re}^2{+}\mathrm{Im}^2$ \\
    \;\;Notch mask & 1 & 3 & Plaintext $\times$ cipher \\
    \;\;Soft power $\gamma{=}4$ & 2 & 5 & Two squarings \\
    \;\;Feature weighting & 1 & 6 & $p \times w$ \\
    \;\;Frame accum. & 0 & 6 & Addition only \\
    \;\;FC$_1$ + square & 2 & 8 & Matmul + $x^2$ \\
    \;\;FC$_2$ + square & 2 & 10 & Matmul + $x^2$ \\
    \;\;FC$_3$ & 1 & 11 & Matmul (final logits) \\
    \bottomrule
  \end{tabular}
\end{table}

\section{Security Analysis}
\label{sec:security}

\name{} \rev{provides} two security properties: \emph{Input Privacy}, \rev{which follows directly from the IND-CPA security of the underlying CKKS~\cite{cheon2017homomorphic} scheme}, and \emph{Data Obliviousness}, \rev{which we prove at the  arithmetic-circuit level as a property of \name{}s' kernel design}. In this section we formally prove that we achieve the security guarantees as properties of the \name{} system design.

\subsection{Input Privacy and Data Obliviousness of \name{}}
\label{subsec:input_privacy}
\textit{Input Privacy:} Since $\tilde{\mathbf{z}}[t]$ is the input to \name{}, we need to prove that the cloud learns nothing about the mmWave radar data input when it is encrypted under the CKKS scheme.

\begin{theorem}[Input Privacy guarantee of \name{}]
\label{thm:input_privacy}
Let $\mathcal{E}$ be an IND-CPA-secure CKKS scheme. For any two sequences of sensor inputs $\{\tilde{\mathbf{z}}_0[t]\}_{t=1}^{F}$ and $\{\tilde{\mathbf{z}}_1[t]\}_{t=1}^{F}$, no PPT adversary can distinguish $\{\mathsf{Enc}(\tilde{\mathbf{z}}_0[t])\}$ from $\{\mathsf{Enc}(\tilde{\mathbf{z}}_1[t])\}$ with non-negligible advantage.
\end{theorem}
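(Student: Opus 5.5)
The proof is a standard hybrid argument that reduces multi-ciphertext indistinguishability to single-message IND-CPA security of $\mathcal{E}$. First I fix the view of the cloud. It consists of the public material $(pk, rlk, gk)$ together with the $2F$ ciphertexts $\hat{\mathbf{v}}[t]^{(\mathrm{re})}, \hat{\mathbf{v}}[t]^{(\mathrm{im})}$ for $t=1,\dots,F$. These encrypt the real and imaginary slot vectors of $\tilde{\mathbf{z}}_b[t]$ under the per-frame or batched packing; either way the result is a fixed number $M$ of plaintext slot vectors $\mathbf{m}^{(b)}_1,\dots,\mathbf{m}^{(b)}_M$, with $M$ independent of the data. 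I assume both input sequences have the same length and shape; the paper's threat model already concedes ciphertext counts and sizes as metadata. Everything the cloud later computes (kernels K1--K7, the FC network) is a deterministic, publicly known function of this view. So it suffices to show the view itself is computationally indistinguishable for $b=0,1$, because post-processing cannot increase advantage.

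Next I define hybrids $H_0,\dots,H_M$. In $H_i$ the first $i$ ciphertexts encrypt $\mathbf{m}^{(1)}_1,\dots,\mathbf{m}^{(1)}_i$ and the remaining ones encrypt $\mathbf{m}^{(0)}_{i+1},\dots,\mathbf{m}^{(0)}_M$. Then $H_0$ is the $b=0$ view and $H_M$ is the $b=1$ view. Given a PPT distinguisher $\mathcal{A}$ with advantage $\epsilon$, I build an IND-CPA adversary $\mathcal{B}$ that picks $i$ uniformly at random. $\mathcal{B}$ submits $(\mathbf{m}^{(0)}_i,\mathbf{m}^{(1)}_i)$ as its challenge pair and receives $c^*$. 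Because CKKS is public-key, it can produce all other ciphertexts itself with $pk$. It then hands $\mathcal{A}$ the assembled vector with $c^*$ in position $i$ and outputs $\mathcal{A}$'s guess. The standard telescoping argument gives $\mathcal{B}$ advantage $\epsilon/M$. Since $M=O(F)$ is polynomial, $\epsilon$ must be negligible.

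The step I expect to need the most care is \emph{what IND-CPA must cover}. The cloud also receives the evaluation keys $rlk$ and $gk$, which are functions of $sk$. Plain IND-CPA given only $pk$ does not by itself cover them. I would handle this by invoking CKKS IND-CPA in the form used in \cite{cheon2017homomorphic} and the HE Standard \cite{albrecht2022homomorphic}: the adversary gets $pk$ together with all evaluation keys, which relies on the circular-security / RLWE assumption stated in the threat model. $\mathcal{B}$ then simply forwards these keys to $\mathcal{A}$.

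I would also remark that the cloud never sees decryptions. This matters because CKKS approximate decryption can leak under IND-CPA$^{\mathrm{D}}$-style attacks, but here only the trusted client decrypts, so standard IND-CPA suffices. Finally, the IND-CPA game requires the challenge messages to lie in the admissible encoding space. The preprocessing step (normalization to a numerically stable range) guarantees that every $\mathbf{m}^{(b)}_i$ is a valid CKKS plaintext of the same dimension.
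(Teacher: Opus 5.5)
Your proposal is correct and takes the same approach as the paper's proof: a hybrid argument that embeds a single IND-CPA challenge ciphertext at one position, generates all other ciphertexts with $pk$, and loses only a polynomial factor in the number of hybrids. Your version is somewhat more careful than the paper's, in two ways: you hybridize over the $M$ individual Re/Im (or batched) ciphertexts rather than over $F$ frames, and you make explicit that $rlk$ and $gk$ must be covered by the CKKS IND-CPA definition through a circular-security assumption. These are refinements of the same argument, not a different one.
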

\noindent
The proof (App. ~\ref{sec:app_B}) follows a hybrid argument over $F$ frames: if an adversary can distinguish the encrypted sensor streams of two different inputs, it can also break the IND-CPA security of the underlying CKKS scheme, which is assumed secure. \revdel{Full proof in Appendix~\ref{sec:app_B}.}\\

\noindent\textit{Data Obliviousness:}
\name{} protects the computation trace by design: each kernel is implemented as a fixed arithmetic circuit, and their sequential composition inherits obliviousness by a standard argument~\cite{goldreich1996software}.

\begin{theorem}[Data Obliviousness Guarantee of \name{}] 
\label{thm:data_oblivious}
Any cloud-side pipeline composed exclusively of \name{} kernels (K1--K7) and FC inference layers is data-oblivious: for any two ciphertext inputs of the same dimension, the execution trace (sequence of operations, memory addresses read and written) is
identical.
\end{theorem}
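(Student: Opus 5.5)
The plan is to formalize ``execution trace'' as a function of the public parameters and circuit description only, and then argue by structural induction over the pipeline composition. First, I would fix a model: a cloud-side program is a straight-line sequence of instructions drawn from $\{\mathsf{Add}, \mathsf{Mult}_{\mathrm{ct}\text{-}\mathrm{ct}}, \mathsf{Mult}_{\mathrm{pt}\text{-}\mathrm{ct}}, \mathsf{Rot}_k, \mathsf{Relin}, \mathsf{Rescale}\}$ together with the buffer addresses it reads and writes. The trace $\mathsf{Tr}(P, \hat{x})$ is the ordered list of (opcode, operand addresses, output address, rotation index $k$, level). The claim to establish is that $\mathsf{Tr}(P,\hat{x})$ depends only on $P$ and the public shape parameters $(N, L, F, R, A, D, \gamma, P_\phi, L_{FC})$ and the public plaintext constants (DFT matrices $\tilde{\mathbf{C}},\tilde{\mathbf{S}}$, Toeplitz taps $\mathbf{T}^{(b)}$, notch mask, FC weights). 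It does not depend on the ciphertext contents $\hat{x}$.

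The base case treats each primitive CKKS operation at the level of polynomial arithmetic. $\mathsf{Add}$ and $\mathsf{Mult}$ are coefficient-wise NTT or RNS operations over fixed-length arrays with loop bounds depending only on $N$ and the current level. $\mathsf{Rot}_k$ is a fixed automorphism followed by key switching with $gk_k$, where $k$ is chosen by the circuit and not by the data. $\mathsf{Relin}$ and $\mathsf{Rescale}$ are likewise fixed. The ciphertext level after each operation is determined by the operation count and not by the values, so the level bookkeeping in Table~\ref{tab:depth} is itself public. I would state this as a lemma: each primitive is a fixed-address, fixed-length computation, a standard property of RLWE implementations that I would assume for the underlying library.

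Next comes the kernel-by-kernel check that K1--K7 and the FC layers are straight-line. K1 is squarings and a sum over the $F$ frames. K2 is $\log_2\gamma$ squarings, a plaintext multiply by the public vector $(0,\dots,R-1)$, and rotate-and-sum. K3 and K5 are Halevi--Shoup/BSGS products whose rotation set and diagonal order depend only on the matrix dimensions and on which diagonals of the public matrix are nonzero. K4 uses $P_\phi$ fixed squarings. K6 is a fixed plaintext multiply. K7 is a fixed polynomial, and the FC layers are matmuls plus squarings. The important point is that every ``selection'' step, such as argmax, CFAR or bin choice, was replaced by soft weighting, so no instruction is guarded by a predicate on data. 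The only branching that occurs is on public constants, for example skipping zero diagonals of $\tilde{\mathbf{C}}$, and I would make this explicit.

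Finally, composition. If $P_1$ and $P_2$ are data-oblivious, then $\mathsf{Tr}(P_2\circ P_1,\hat{x}) = \mathsf{Tr}(P_1,\hat{x})\,\|\,\mathsf{Tr}(P_2, P_1(\hat{x}))$. The output addresses and levels of $P_1$ are fixed, so the second trace is again input-independent, and induction on the number of kernels, following \cite{goldreich1996software}, gives the theorem. I expect the main obstacle to be the base-case lemma rather than the composition. Real CKKS libraries, especially GPU ones, may use lazy reduction, memory pools or conditional modular corrections whose timing could depend on coefficients. I would handle this by making the claim explicitly relative to a constant-time primitive implementation, and by noting that ciphertext coefficients are pseudorandom under RLWE anyway. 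Any residual dependence would therefore be on values computationally independent of the plaintext, which ties back to Theorem~\ref{thm:input_privacy}.
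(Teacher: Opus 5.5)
Your proposal is correct and follows the same route as the paper's proof: a per-kernel check that K1--K7 and the FC layers are fixed circuits determined only by public configuration, then the composition identity $\mathrm{Trace}(C_2\circ C_1,\mathbf{x}) = \mathrm{Trace}(C_1,\mathbf{x}) \,\|\, \mathrm{Trace}(C_2, C_1(\mathbf{x}))$ with induction over the chain. Your extra base-case lemma on constant-time CKKS primitives, and your note that intermediate levels and addresses are public, make explicit what the paper only states as an assumption: that OpenFHE and FIDESlib preserve the CKKS security properties at runtime.
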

\noindent The proof (App. ~\ref{sec:app_C}) first shows that each kernel is individually data-oblivious by construction (its operation sequence depends only on public parameters, not on the encrypted input), then uses induction to show that sequential composition of oblivious kernels preserves obliviousness. \revdel{Full proof in Appendix~\ref{sec:app_C}.}

\textit{Remark on public parameters \rev{and libraries.}} The DFT kernel $\mathbf{W}$ (Eq.~\ref{eq:dft_kernel}), notch mask $m[d]$ (Eq.~\ref{eq:notch_mask}), FIR filter taps $\mathbf{T}^{(b)}$ (Eq.~\ref{eq:fir_iq}), and FC weights $\mathbf{W}^{(\ell)}, \mathbf{b}^{(\ell)}$ (Eq.~\ref{eq:mlp_forward}) are public model parameters shared with the cloud. They contain no user-specific information and do not require privacy protection. \revdel{Input Privacy protects the sensor data, not the processing circuit.} \rev{Open-source FHE backends executing these circuits during implementation: OpenFHE~\cite{al2022openfhe} and FIDESlib~\cite{agullo2025fideslib} are assumed to preserve the CKKS security properties at runtime, we do not verify the libraries for correctness.
% i.e., that rescaling, key switching, memory allocation, and GPU kernel scheduling do not depend on ciphertext contents. We do not verify the libraries themselves for correctness.
}

%-------------------------------------------------------------------
\subsection{Leakage Profile}
\label{subsec:leakage}

We explicitly quantify the residual information exposed to the cloud.

\noindent\textbf{Leakage~1: Protocol interaction pattern.} The protocol is unidirectional per window: client sends encrypted data, cloud evaluates, client decrypts. Ciphertext count and size are fixed by public configuration, not input data. All ciphertexts are IND-CPA secure (Theorem~\ref{thm:input_privacy}).

\noindent\textbf{Leakage~2: Public configuration metadata.} The cloud observes $R, D, A, F, \gamma$, filter order, Taylor order, FC architecture, and session timing; all fixed prior to deployment.

\noindent\textbf{Leakage~3: Public model parameters.} DFT kernel $\mathbf{W}$, notch mask $m[d]$, FIR taps $\mathbf{T}^{(b)}$, and FC weights $\mathbf{W}^{(\ell)}, \mathbf{b}^{(\ell)}$---analogous to circuit descriptions in standard FHE protocols.

In summary, total leakage consists exclusively of public metadata and model parameters. No range-bin index, phase value, energy distribution, or vital-sign waveform is exposed.

%-------------------------------------------------------------------
\subsection{Defense Against CKKS-Specific Attacks}
\label{subsec:ckks_defenses}

The CKKS scheme introduces unique security considerations due to
its approximate arithmetic semantics.

\noindent\textbf{Noise-growth analysis.}
In branching circuits, CKKS noise variance can differ depending on
the computation path, potentially leaking the branch condition.
Because \name{} executes a fixed circuit with no branching
(Theorem~\ref{thm:data_oblivious}), noise growth is identical
for all inputs at every ciphertext level. An adversary analyzing
output noise variance learns nothing beyond the public parameters.

% \noindent\textbf{Approximate arithmetic.}
% Each CKKS rescaling introduces rounding error $O(2^{-\Delta})$.
% For the vital signs pipeline ($\Delta{=}40$\,bits, $L{=}11$),
% cumulative error is
% ${\sim}\,14 \times 2^{-40} \approx 1.3 \times 10^{-11}$. For the
% dynamic classification pipeline ($\Delta{=}50$\,bits, $L{=}11$),
% cumulative error is
% ${\sim}\,11 \times 2^{-50} \approx 9.8 \times 10^{-15}$. Both are
% orders of magnitude below the radar sensor noise floor (typically
% ${>}\,10^{-6}$ in normalized units). The rounding artifacts do not
% create a distinguishable side channel.
\noindent\textbf{Approximate arithmetic.}
Each CKKS rescaling introduces rounding error $O(2^{-\Delta})$. For the vital-signs pipeline ($\Delta{=}40$\,bits), cumulative error is ${\sim}\,1.3 \times 10^{-11}$. For the dynamic classification pipeline ($\Delta{=}50$\,bits), cumulative error is ${\sim}\,9.8 \times 10^{-15}$. Both are orders of magnitude below the radar sensor noise floor (typically ${>}\,10^{-6}$ in normalized units). The rounding artifacts do not create a distinguishable side channel.

\noindent\textbf{IND-CPA$^D$ and passive decryption attacks.}
Li and Micciancio~\cite{li2021security} showed that CKKS is
vulnerable to chosen-ciphertext attacks where an adversary recovers
the secret key from approximate decryption results. In \name{}, the
cloud never receives decrypted outputs: all decryption occurs on the
trusted client; so the IND-CPA$^D$ attack vector does not apply.

\section{Evaluation}
\label{sec:evaluation}

% We evaluate \name{} along four axes: per-kernel fidelity
% (\S\ref{subsec:fidelity}), end-to-end accuracy
% (\S\ref{subsec:accuracy}), computational overhead
% (\S\ref{subsec:overhead}), and privacy gain
% (\S\ref{subsec:privacy_eval}).

\subsection{Setup}
\label{subsec:setup}
\label{sec:implementation}

\noindent\textbf{Datasets.}
We evaluate \name{} on three public FMCW radar datasets spanning two sensing applications. Table~\ref{tab:dataset_config} summarizes the radar configuration for each dataset. For vital-signs we use Children~\cite{children_dataset} (pediatric subjects) and Parralejo~\cite{parralejo2026mmwave} (lying/sitting conditions). For gesture recognition we use \rev{the} 60\,GHz \rev{gesture dataset}~\cite{zenodo_gesture} (21{,}000 samples, 5 classes, 6 environments). \rev{All three datasets record a single dominant subject in the line of sight of the radar: the vital-sign datasets capture one stationary participant seated or lying ${\sim}0.5$\,m from the sensor in a single furnished room, and the gesture dataset captures one participant gesturing within 1\,m of a short-range radar  across six indoor environments (meeting room, open-plan office, library, kitchen, exercise
room, bedroom). Scenes with multiple people or non-line-of-sight propagation are not represented in this evaluation.}

\begin{table}[t]
  \centering
  \caption{Radar and dataset configurations. All datasets are publicly
  available.}
  \label{tab:dataset_config}
  \footnotesize
  \setlength{\tabcolsep}{1pt}
  \begin{tabular}{@{}l l cccccc@{}}
    \toprule
    \textbf{Dataset} & \textbf{Task}
      & \textbf{Radar} & $f_c$ & \textbf{BW}
      & \textbf{FPS} & \textbf{Subj.} & \textbf{GT} \\
    & & & (GHz) & (GHz) & (Hz) & & \\
    \midrule
    Children~\cite{children_dataset}
      & Vital & IWR6843 & 60 & 3.75 & 20 & 50 & BSM6501K \\
    Parralejo~\cite{parralejo2026mmwave}
      & Vital & IWR6843ISK & 60.25 & 0.48 & 10 & 110 & ECG \\
    \rev{Gesture}~\cite{zenodo_gesture}
      & Gesture & BGT60TR13C & 60.5 & 4 & 33 & \rev{12} & Labels \\
    \bottomrule
  \end{tabular}
\end{table}

\noindent\textbf{Processing configuration.}
For the vital-signs pipeline we use $R{=}64$ range bins and $F{=}200$ frames, corresponding to a 10\,s sample window for Children (20\,Hz) and a 20\,s sample window for Parralejo (10\,Hz), yielding 6\,bpm and 3\,bpm frequency resolution, respectively. For the gesture pipeline, we retain $R{=}16$ range bins and $D{=}32$ chirps across $A{=}3$ antennas, yielding $ARD{=}1{,}536$ active slots per ciphertext out of $N/2{=}\rev{16{,}384}$ available.
\rev{In both pipelines, the kernels aggregate per-frame statistics within each fixed sample window. The current implementation does not track any \textit{state} across frames. We also note that our example pipelines do not include constructing or processing 3D heatmaps or point clouds as intermediate steps (\S\ref{sec:discussion}).}

\noindent\textbf{Baselines.} We compare \name{} against the following three baselines.
(1)~\textit{Standard DSP on Plaintext}: standard DSP algorithms executed on unencrypted data;
% , establishing the conventional accuracy ceiling.
(2)~\textit{\name{} on Plaintext}: identical DSP kernels executed on unencrypted data; and
% , establishing the error introduced by the \name{} kernel approximations of standard DSP algorithms.
(3)~\textit{RPRS~\cite{wu2025rprs} on plaintext feature maps}: encrypted post-processed radar features with FHE CNN inference.

\noindent\textbf{Metrics.}
For vital-signs application pipeline we compare \name{} against baseline (2) described above; HR/RR MAE is the mean absolute difference in the estimated rate (bpm), waveform MSE is the mean squared error of the extracted phase signal per window, and latency is GPU wall-clock time per second of sensor data recording. \rev{These metrics isolate the error introduced by encryption and should not be read as clinical accuracy.}
For gesture classification we report top-1 accuracy under 6-fold leave-one-environment-out cross-validation\rev{, over the 12{,}000
nominal recordings of one subject (2{,}000 per environment).}

\noindent\textbf{Training.}
The FC networks are trained offline in plaintext
using the exact feature computation required for inference. After convergence the weights are frozen and shipped to
the cloud as public plaintext parameters.

\noindent\textbf{Implementation.}
We implement both application pipelines using
OpenFHE~\cite{al2022openfhe} (CPU, Python/C++) and FIDESlib~\cite{agullo2025fideslib} (GPU, C++) CKKS libraries with parameters listed in Table~\ref{tab:depth}.
All experiments run on a single workstation with an AMD Ryzen~9 5950X
(16C/32T), 64\,GB RAM, and an NVIDIA RTX~3090\,Ti (24\,GB, CUDA~12.9).
Client-side encryption and decryption are run on the same machine, as well as Raspberry Pi 4 (4~GB) for benchmarking.

%% ----------------------------------------------------------------
\subsection{Kernel Fidelity and Noise Budget}
\label{subsec:fidelity}

Table~\ref{tab:kernel_fidelity} reports the per-kernel
\emph{encryption noise}: the MSE between encrypted and plaintext execution of the identical pipeline, isolating the error introduced
by CKKS ciphertext arithmetic alone.
Each row is measured cumulatively (i.e., after all preceding kernels),
over 3 test recordings per pipeline.
All kernels achieve encryption noise below $10^{-5}$, confirming that
CKKS arithmetic does not meaningfully perturb kernel outputs at any
stage.

\begin{table}[t]
  \centering
  \caption{Per-kernel encryption noise (encrypted vs.\ plaintext
  execution of the identical pipeline).}
  \label{tab:kernel_fidelity}
  \footnotesize
  \setlength{\tabcolsep}{3pt}
  \begin{tabular}{@{}l ccc@{}}
    \toprule
    \textbf{Kernel} & \textbf{MSE} & \textbf{Max $|$err$|$}
      & \textbf{Depth} \\
    \midrule
    K1: Energy integ.     & $4.1{\times}10^{-14}$ & $1.5{\times}10^{-6}$ & 1 \\
    K2: Soft attention     & $5.5{\times}10^{-6}$ & $6.9{\times}10^{-3}$ & 2 \\
    K3: Doppler DFT        & $4.8{\times}10^{-22}$ & $7.8{\times}10^{-11}$ & 1 \\
    K4: Phase extr.        & $4.6{\times}10^{-14}$ & $1.0{\times}10^{-6}$ & 3 \\
    K5: FIR filter         & $1.5{\times}10^{-14}$ & $5.2{\times}10^{-7}$ & 4 \\
    K6: Notch mask         & $8.2{\times}10^{-24}$ & $7.0{\times}10^{-12}$ & 3 \\
    K7: Taylor arctan       & $5.3{\times}10^{-15}$ & $2.3{\times}10^{-7}$ & 5 \\
    \bottomrule
  \end{tabular}
\end{table}

\noindent\textbf{FHE-friendly approximation fidelity (Baseline 1).}
Of the seven kernels, four (K1, K3, K5, K6) are mathematically
equivalent to their standard DSP counterparts: energy summation,
DFT via matrix multiply, FIR convolution via Toeplitz multiply, and
elementwise masking introduce no approximation error beyond CKKS noise.
The remaining kernels replace non-polynomial operations with
FHE-compatible alternatives: K4 uses soft weighted I/Q extraction
instead of hard peak selection, and K7 uses a first-order Taylor
cross-product instead of \texttt{atan2} followed by phase unwrapping
and differentiation.
To quantify this trade-off, we compare the phase signals produced by
the FHE-friendly pipeline (K4+K7) against a standard DSP reference
(hard argmax $\to$ \texttt{atan2} $\to$ unwrap $\to$ differentiation)
across 40 recordings from three datasets.
The mean phase-signal MSE is 0.153 for the respiratory band and
0.182 for the heart-rate band, yet the end-to-end HR/RR estimates
remain within ${<}10^{-3}$\,bpm of the plaintext baseline
(Tab.~\ref{tab:vital_signs_accuracy}), confirming that the
polynomial approximations are a practical trade-off for FHE
compatibility.

Figure~\ref{fig:noise_budget} tracks the encryption noise at each
kernel output as multiplicative depth grows through the pipeline.
For the vital-signs pipeline ($L{=}11$), K2 exhibits the highest
encryption noise (${\sim}4{\times}10^{-6}$) because squaring large
energy values amplifies the output magnitude; subsequent kernels drop
back to ${\sim}10^{-14}$ as K4's plaintext rescaling
($1/\text{frame\_max}$ via plaintext$\times$ciphertext multiplication)
contracts the output range.
The final PSD output at depth~10 reaches ${\sim}5{\times}10^{-12}$.
For the gesture pipeline ($L{=}11$), encryption noise remains below
$10^{-21}$ through the DSP stages and reaches ${\sim}5{\times}10^{-19}$
at the FC output, where matrix--vector accumulation over thousands of
terms raises the noise floor.
The bars are not monotonically increasing because absolute CKKS error
scales with output magnitude, not just multiplicative depth: kernels
that contract the value range (e.g., plaintext rescaling in K4, small
FIR taps in K5) reduce the absolute noise envelope, while operations
that amplify magnitude (K2 squaring, FC matmul) increase it.
In both pipelines the encryption noise remains orders of magnitude
below the signal, confirming that the CKKS noise budget is comfortably
within operating margins.

\begin{figure}[]
  \centering
  \includegraphics[width=\columnwidth]{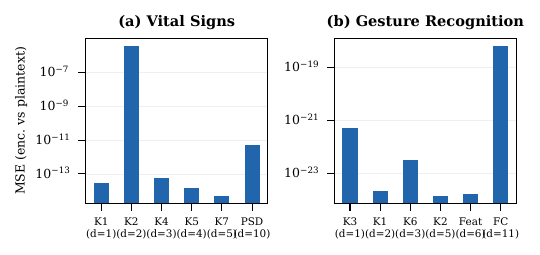}
  \caption{Per-kernel encryption noise (MSE between encrypted and
  plaintext execution of the identical pipeline) at each stage.
  Bars are not monotonically increasing: kernels that contract the value range
  (e.g., plaintext rescaling in K4) reduce absolute error, while those
  that amplify it (K2 squaring, FC matmul) increase it.}
  \label{fig:noise_budget}
\end{figure}

%% ----------------------------------------------------------------
\subsection{Accuracy \rev{of \name{}}}
\label{subsec:accuracy}

\noindent\textbf{Vital Signs.}
Table~\ref{tab:vital_signs_accuracy} compares the plaintext and encrypted pipelines across two vital-signs datasets comprising 270 total recordings. All metrics measure the \emph{discrepancy} between the plaintext baseline (2) and the \name{} encrypted pipeline \rev{(not to be read as clinical accuracy)}. The encrypted pipeline produces outputs indistinguishable from the baseline at the rate level ($< 10^{-3}$\,bpm), with waveform MSE dominated by the first-order Taylor  approximation rather than CKKS noise.

\begin{table}[]
  \centering
  \caption{Vital-signs fidelity: waveform MSE and rate MAE,
  plaintext vs.\ encrypted pipeline. $^\dagger$Lying, $^\ddagger$Sitting.}
  \label{tab:vital_signs_accuracy}
  \footnotesize
  \setlength{\tabcolsep}{3pt}
  \begin{tabular}{@{}l c cc c@{}}
    \toprule
    \textbf{Dataset} & \textbf{$n$}
      & \textbf{HR MSE}\,$\downarrow$
      & \textbf{RR MSE}\,$\downarrow$
      & \textbf{HR/RR Rate MAE}\,$\downarrow$ \\
    \midrule
    Children  & 50 & 0.264 & 0.181 & ${<}10^{-3}$\,bpm \\
    Parralejo & 110$^\dagger$/110$^\ddagger$
              & 0.064$^\dagger$/0.136$^\ddagger$
              & 0.074$^\dagger$/0.179$^\ddagger$
              & ${<}10^{-3}$\,bpm \\
    \bottomrule
  \end{tabular}
\end{table}

\noindent\textbf{Gesture Classification.}
Table~\ref{tab:gesture_accuracy} reports per-environment classification
accuracy on the 5-class gesture dataset~\cite{zenodo_gesture}, measured against baseline (2) over 100 test recording per fold. The plaintext pipeline achieves 84.7\% mean accuracy under 6-fold cross-validation; the encrypted pipeline matches at 84.5\% with 99.8\% prediction agreement, as the CKKS noise floor ($\sim\!10^{-10}$) does not perturb the argmax of the output logits in all but one recording.

% \begin{table}[t]
%   \centering
%   \caption{Gesture classification accuracy (5-class, 60\,GHz \rev{gesture dataset}).
%   6-fold leave-one-environment-out cross-validation.}
%   \label{tab:gesture_accuracy}
%   \footnotesize
%   \setlength{\tabcolsep}{4pt}
%   \begin{tabular}{@{}l ccc@{}}
%     \toprule
%     \textbf{Fold} & \textbf{Plaintext} & \textbf{mmFHE} & \textbf{Agreement} \\
%     \midrule
%     e1 & 72.0\% & 72.0\% & 100\% \\
%     e2 & 88.0\% & 88.0\% & 100\% \\
%     e3 & 82.0\% & 82.0\% & 100\% \\
%     e4 & 92.0\% & 92.0\% & 100\% \\
%     e5 & 86.0\% & 85.0\% & 99\% \\
%     e6 & 88.0\% & 88.0\% & 100\% \\
%     \midrule
%     \textbf{Mean} & \textbf{84.7\%} & \textbf{84.5\%} & \textbf{99.8\%} \\
%     \bottomrule
%   \end{tabular}
% \end{table}

\begin{table}[]
  \centering
  \caption{Gesture classification accuracy (5-class, 60\,GHz
  \rev{gesture dataset}). 6-fold leave-one-environment-out
  cross-validation.}
  \label{tab:gesture_accuracy}
  \footnotesize
  \setlength{\tabcolsep}{3.5pt}
  \begin{tabular}{@{}l ccccccc@{}}
    \toprule
    \textbf{Fold} & \textbf{e1} & \textbf{e2} & \textbf{e3} & \textbf{e4}
      & \textbf{e5} & \textbf{e6} & \textbf{Mean} \\
    \midrule
    \textbf{Plaintext} & 72\% & 88\% & 82\% & 92\% & 86\% & 88\% & \textbf{84.7\%} \\
    \textbf{\name{}}   & 72\% & 88\% & 82\% & 92\% & 85\% & 88\% & \textbf{84.5\%} \\
    \midrule
    Agreement & 100\% & 100\% & 100\% & 100\% & 99\% & 100\% & \textbf{99.8\%} \\
    \bottomrule
  \end{tabular}
\end{table}

%% ----------------------------------------------------------------
\subsection{Computational Overhead}
\label{subsec:overhead}

Table~\ref{tab:latency} breaks down per-stage GPU latency for both pipelines \rev{under the per-frame and the batched packing strategies. Context setup and key generation are one-time per-deployment costs ($^\dagger$), reported in the table but excluded from the per-window totals and slowdown factors. We report steady-state total costs.}

% \begin{table}[t]
%   \centering
%   \caption{Per-stage latency breakdown. Plaintext times in ms;
%   encrypted times in seconds.}
%   \label{tab:latency}
%   \footnotesize
%   \setlength{\tabcolsep}{3pt}
%   \begin{tabular}{@{}l r r@{}}
%     \toprule
%     \textbf{Stage}
%       & \textbf{Plain (ms)}
%       & \textbf{GPU (s)} \\
%     \midrule
%     \multicolumn{3}{@{}l}{\textit{Pipeline 1: Vital Signs (Children, 200\,fr @ 20\,Hz)}} \\
%     \;\;Context + keygen     & ---   & 3.84 \\
%     \;\;Encrypt + K1: Energy & 0.119  & 11.66 \\
%     \;\;K2: Soft attention    & 0.010  & 0.05 \\
%     \;\;Encrypt + K4: Phase  & 2.551   & 15.91 \\
%     \;\;K5+K7: FIR + Taylor  & 0.016   & 43.22 \\
%     \;\;DFT + PSD + decrypt  & 0.094   & 28.64 \\
%     \cmidrule(l){2-3}
%     \;\;\textbf{Total}       & \textbf{2.790}  & \textbf{103.32} \\
%     \;\;Slowdown             & $1\times$    & ${37{,}033\times}$ \\
%     \midrule
%     \multicolumn{3}{@{}l}{\textit{Pipeline 2: Dynamic Classification (100\,fr @ 33\,Hz)}} \\
%     \;\;Context + keygen     & ---   & 3.69 \\
%     \;\;Encrypt              & ---   & 6.17 \\
%     \;\;DSP (K3+K1+K6+K2+feat.)& 13.298 & 20.96 \\
%     \;\;FC1 + sq             & 0.138  & 4.43 \\
%     \;\;FC2 + sq             & 0.109  & 3.91 \\
%     \;\;FC3                  & 0.001  & 1.09 \\
%     \;\;Decrypt              & ---   & 0.01 \\
%     \cmidrule(l){2-3}
%     \;\;\textbf{Total}       & \textbf{13.55} & \textbf{36.56} \\
%     \;\;Slowdown             & $1\times$ & ${2{,}699\times}$ \\
%     \bottomrule
%   \end{tabular}
% \end{table}
\begin{table}[t]
  \centering
  \caption{Per-stage latency breakdown. \rev{$^\dagger$One-time setup per deployment, excluded from totals; $^\ddagger$includes one-time encoding of the public plaintext operands.}}
  \label{tab:latency}
  \footnotesize
  \setlength{\tabcolsep}{3pt}
  \begin{tabular}{@{}l r r r@{}}
    \toprule
    \textbf{Stage}
      & \textbf{Plain}
      & \rev{\textbf{mmFHE}}
      & \rev{\textbf{mmFHE}} \\
      & \textbf{(ms)}
      & \rev{\textbf{per-frame (s)}}
      & \rev{\textbf{batched (s)}} \\
    \midrule
    \multicolumn{4}{@{}l}{\textit{Pipeline 1: Vital Signs (Children, 200\,fr @ 20\,Hz)}} \\
    \;\;Context + keygen\rev{$^\dagger$}     & ---   & 3.84 & \rev{1.98$^\ddagger$} \\
    \;\;Encrypt + K1: Energy & 0.119  & 11.66 & \rev{0.07} \\
    \;\;K2: Soft attention    & 0.010  & 0.05 & \rev{0.04} \\
    \;\;Encrypt + K4: Phase  & 2.551   & 15.91 & \rev{0.13} \\
    \;\;K5+K7: FIR + Taylor  & 0.016   & 43.22 & \rev{0.03} \\
    \;\;DFT + PSD + decrypt  & 0.094   & 28.64 & \rev{0.94} \\
    \cmidrule(l){2-4}
    \;\;\textbf{Total}       & \textbf{2.790}  & \rev{\textbf{99.48}} & \rev{\textbf{1.21}} \\
    \;\;Slowdown             & $1\times$    & \rev{$35{,}656\times$} & \rev{$434\times$} \\
    \midrule
    \multicolumn{4}{@{}l}{\textit{Pipeline 2: Dynamic Classification (100\,fr @ 33\,Hz)}} \\
    \;\;Context + keygen\rev{$^\dagger$}     & ---   & 3.69 & \rev{3.75$^\ddagger$} \\
    \;\;Encrypt              & ---   & 6.17 & \rev{0.82} \\
    \;\;DSP (K3+K1+K6+K2+feat.)& 13.298 & 20.96 & \rev{0.47} \\
    \;\;FC1 + sq             & 0.138  & 4.43 & \rev{2.27} \\
    \;\;FC2 + sq             & 0.109  & 3.91 & \rev{2.06} \\
    \;\;FC3                  & 0.001  & 1.09 & \rev{0.13} \\
    \;\;Decrypt              & ---   & 0.01 & \rev{0.01} \\
    \cmidrule(l){2-4}
    \;\;\textbf{Total}       & \textbf{13.55} & \textbf{36.56} & \rev{\textbf{5.76}} \\
    \;\;Slowdown             & $1\times$ & ${2{,}699\times}$ & \rev{$425\times$} \\
    \bottomrule
  \end{tabular}
\end{table}

\noindent\rev{In the per-frame packing,} FIR+Taylor arctan dominates the
vital-signs pipeline (\rev{43\%} of
total), followed by PSD peak detection (\rev{29}\%) and phase extraction
(\rev{16}\%). Both stages perform per-frame ciphertext--ciphertext
multiplications over all $F{=}200$ frames.
The \rev{per-frame} PSD stage cost scales linearly with the number of frequency
bins searched: at 10\,Hz frame rate (Parralejo), the HR band
contains \rev{34} DFT bins vs.\ \rev{17} at 20\,Hz (Children), doubling
PSD time to 57\,s and raising total latency to 132\,s.
For the gesture pipeline (3-second windows), the GPU achieves
36.6\,s per inference.

\begin{revblock}
\noindent\textbf{Batched execution.}
The batched packing (\S\ref{subsec:client_preprocessing}) encrypts the full vital-signs window into 4 ciphertexts and the gesture window into 26 (8 frames per ciphertext), so every kernel processes all frames simultaneously in SIMD fashion. The public plaintext operands are encoded once per deployment and cached ($^\ddagger$ in Tab.~\ref{tab:latency}). In steady state this reduces the end-to-end latency to 1.21\,s per vital-signs window ($434\times$ over plaintext) and 5.76\,s per gesture window ($425\times$). The previous per-frame bottlenecks become negligible, leaving the vital-signs pipeline dominated by the PSD stage (0.94\,s) and the gesture pipeline by the FC layers (4.5\,s). Because the batched PSD evaluates the partial DFT as a single diagonal matrix product over the packed phase series, its cost is nearly independent of the number of bins searched: Parralejo dataset (34 HR bins) completes in 0.9--1.1\,s per 20\,s window, on par with Children (17 bins). 
% Batching changes only the ciphertext layout, not the computation: the outputs match the per-frame pipeline within output precision on all 270 vital-sign recordings, with identical per-fold gesture accuracy (Table~\ref{tab:gesture_accuracy}).
\end{revblock}

\noindent\textbf{Amortized cost.}
\rev{In the per-frame packing,} the Children pipeline processes 200 frames (10\,s at 20\,Hz) in \rev{99.5}\,s on GPU, yielding an amortized cost of \rev{497}\,ms per frame (\rev{9.9}\,s per second of sensor data). The Parralejo pipeline processes 200 frames (20\,s at 10\,Hz) in 132\,s, yielding 660\,ms per frame (6.6\,s per second). The gesture pipeline \rev{processes} $F{=}100$ frames (3\,s at 33\,Hz) per \rev{window}, giving 366\,ms per frame. \rev{With batching, the amortized cost falls to 6\,ms per frame (0.12\,s per second of sensor data) for Children, $5$\,ms for Parralejo, and 58\,ms for gesture.}

\noindent\textbf{Communication and resource overhead.} Table~\ref{tab:comm_overhead} reports per-window communication costs, client-side encryption latency, and cloud GPU memory. Both pipelines use ring dimension $N{=}32{,}768$ ($128$-bit security), yielding ciphertexts of 5.5--6.0\,MB each. \rev{In the per-frame packing,} the vital-signs pipeline encrypts 800 ciphertexts per 10\,s window (\rev{four per frame: a Re/Im pair for each of the two input streams, K1 and K4}), totalling 4.3\,GB uplink. The gesture pipeline encrypts 200 ciphertexts per 3\,s window (1.2\,GB uplink). Key material (public, relinearization, and Galois rotation keys) is a one-time setup cost of 3.1\,GB (vital, 140 rotation indices) and \rev{4.4}\,GB (gesture, 188 indices); in practice, this is transmitted once at enrollment and cached on the cloud. \rev{We measure client-side encryption on a Raspberry Pi~4 edge device (4\,GB, single core): in the per-frame packing it takes 164\,s per vital-signs window and 44.3\,s per gesture window ($\sim$205--228\,ms per ciphertext), infeasible for live operation.} The cloud GPU (RTX 3090 Ti, 24\,GB) peaks at 20.7\,GB during \rev{per-frame} vital-signs inference and 4.3\,GB during gesture inference. \rev{Batching reduces all client-side costs: the input collapses to 4 and 26 ciphertexts per window, cutting the uplink to 22\,MB and 156\,MB; the one-time key material drops to 1.4\,GB (60 rotation indices) and 2.7\,GB (115 indices); and encryption on the Raspberry Pi~4 falls to 0.84\,s and 5.9\,s per window, making the edge client viable. On the cloud side, the batched layouts peak at 5.3\,GB and 7.5\,GB of GPU memory, most of which holds the cached encodings of the public constants (DFT diagonals, filter taps, FC weights) reused across windows.}

\begin{table}[t]
  \centering
  \caption{Communication and resource overhead per sensing window \revdel{Key material is a one-time setup cost; uplink/downlink are per window.} \rev{for a Raspberry Pi~4 client.}}
  \label{tab:comm_overhead}
  \footnotesize
  \setlength{\tabcolsep}{2pt}
  \begin{tabular}{@{}l rrrr@{}}
    \toprule
    \textbf{Metric}
      & \multicolumn{2}{c}{\textbf{Vital Signs}}
      & \multicolumn{2}{c@{}}{\textbf{Gesture}} \\
    \cmidrule(lr){2-3}\cmidrule(l){4-5}
      & \rev{\textbf{per-fr.}} & \rev{\textbf{batch.}}
      & \rev{\textbf{per-fr.}} & \rev{\textbf{batch.}} \\
    \midrule
    Ring dimension $N$       & 32{,}768 & \rev{32{,}768} & 32{,}768 & \rev{32{,}768} \\
    Ciphertext size          & 5.5\,MB  & \rev{5.5\,MB} & 6.0\,MB & \rev{6.0\,MB} \\
    Input cts / window       & 800      & \rev{4} & 200 & \rev{26} \\
    Uplink / window          & 4.3\,GB  & \rev{22\,MB} & 1.2\,GB & \rev{156\,MB} \\
    Downlink / window        & \rev{33\,MB} & \rev{33\,MB} & 6\,MB & \rev{6\,MB} \\
    \midrule
    Key material (one-time)  & 3.1\,GB  & \rev{1.4\,GB} & \rev{4.4\,GB} & \rev{2.7\,GB} \\
    \;\;Public key           & 5.5\,MB  & \rev{5.5\,MB} & 6.0\,MB & \rev{6.0\,MB} \\
    \;\;Relin.\ key          & 22.5\,MB & \rev{22.5\,MB} & \rev{24.0\,MB} & \rev{24.0\,MB} \\
    \;\;Galois keys          & 3.1\,GB  & \rev{1.3\,GB} & \rev{4.4\,GB} & \rev{2.7\,GB} \\
    \midrule
    \rev{Client encrypt (RPi\,4)} & \rev{164\,s} & \rev{0.84\,s} & \rev{44.3\,s} & \rev{5.9\,s} \\
    \rev{Encrypt / ct (RPi\,4)}   & \rev{205\,ms} & \rev{210\,ms} & \rev{222\,ms} & \rev{228\,ms} \\
    Cloud GPU memory (peak)  & 20.7\,GB & \rev{5.3\,GB} & 4.3\,GB & \rev{7.5\,GB} \\
    \bottomrule
  \end{tabular}
\end{table}

% \begin{table}[t]
%   \centering
%   \caption{Communication and resource overhead per sensing window.
%   Key material is a one-time setup cost; uplink/downlink are per window.}
%   \label{tab:comm_overhead}
%   \footnotesize
%   \setlength{\tabcolsep}{3pt}
%   \begin{tabular}{@{}l rr@{}}
%     \toprule
%     \textbf{Metric}
%       & \textbf{Vital Signs}
%       & \textbf{Gesture} \\
%     \midrule
%     Ring dimension $N$       & 32{,}768 & 32{,}768 \\
%     Ciphertext size          & 5.5\,MB  & 6.0\,MB \\
%     Input cts / window       & 800      & 200 \\
%     Uplink / window          & 4.3\,GB  & 1.2\,GB \\
%     Downlink / window        & 11\,MB   & 6\,MB \\
%     \midrule
%     Key material (one-time)  & 3.1\,GB  & 7.5\,GB \\
%     \;\;Public key           & 5.5\,MB  & 6.0\,MB \\
%     \;\;Relin.\ key          & 22.5\,MB & 46.5\,MB \\
%     \;\;Galois keys          & 3.1\,GB  & 7.5\,GB \\
%     \midrule
%     Client encrypt (CPU, 1 core) & 26.1\,s & 6.7\,s \\
%     Client encrypt / ct      & 33\,ms   & 33\,ms \\
%     Cloud GPU memory (peak)  & 20.7\,GB & 4.3\,GB \\
%     \bottomrule
%   \end{tabular}
% \end{table}

%% ----------------------------------------------------------------
\subsection{Privacy Analysis}
\label{subsec:privacy_eval}

\noindent\textbf{mmFHE vs.\ RPRS.}
Table~\ref{tab:rprs_comparison} compares \name{} against the RPRS~\cite{wu2025rprs}  baseline (3) and plaintext baseline (1). Since RPRS's original dataset is not publicly available, we report their CNN-only FHE approach on the public 60\,GHz gesture dataset\rev{~\cite{zenodo_gesture}}\revdel{for a fair comparison}.

RPRS achieves the same accuracy as the plaintext baseline (95.0\%),
since its CNN input is computed in cleartext.
\name{} achieves 84.5\% under full encryption; however, our plaintext
pipeline (identical DSP and CNN, no encryption) reaches 84.67\%,
confirming that CKKS introduces less than 0.2\,percentage points accuracy loss.
The remaining gap to 95.0\% stems from the different feature extraction
pipelines, not from encryption\revdel{error}.

In terms of latency, \rev{in the per-frame packing} the encrypted DSP
stage accounts for the bulk of \name{}'s overhead (20.96\,s per
inference)\rev{; batching reduces the DSP stage to 0.47\,s, leaving the
FC layers (4.5\,s) as the dominant cost}.
Notably, \name{}'s CNN-only FHE latency (9.42\,s \rev{per-frame,
4.5\,s batched}) is lower than RPRS's (\rev{14.09}\,s)
despite a deeper circuit (depth~11 vs.~5), because of GPU acceleration
\rev{and cached plaintext operands}.
\rev{With batching, the entire encrypted \name{} pipeline (5.76\,s,
Table~\ref{tab:latency}) runs $2.4\times$ faster than RPRS's CNN-only
stage alone, while encrypting the full DSP chain rather than only the
classifier.}
Table~\ref{tab:fhe_params} \rev{summarizes} the CKKS
parameters of both systems \rev{(entries marked ``---'' were not reported)}.

\begin{table}[t]
  \centering
  \caption{Three-way gesture classification comparison on the \rev{gesture} dataset (6-fold CV, $n{=}100$ per fold).
  }
  \label{tab:rprs_comparison}
  \footnotesize
  \begin{tabular}{@{}lcc@{}}
    \toprule
    \textbf{Method} & \textbf{Accuracy} & \textbf{Latency}
      \\
    \midrule
    Plaintext      & 95.0\% & 0.04\,ms \\
    RPRS (repro.)  & 95.0\% & 14.09\,s  \\
    \name{} (ours) & 84.5\% & \rev{5.76}\,s \\
    \bottomrule
  \end{tabular}
\end{table}

\begin{table}[t]
  \centering
  \caption{CKKS parameters of RPRS~\cite{wu2025rprs} and  \name{}.}
  \label{tab:fhe_params}
  \footnotesize
  \begin{tabular}{@{}lcc@{}}
    \toprule
    \textbf{Parameter} & \textbf{RPRS} & \textbf{\name{} (ours)} \\
    \midrule
    FHE scheme          & CKKS              & CKKS \\
    Library             & MS SEAL 4.1 (CPU) & OpenFHE + FIDESlib (GPU) \\
    Polynomial degree   & 8{,}192           & 32{,}768 \\
    Slots               & 4{,}096           & \rev{16{,}384} \\
    Encrypted stages    & CNN only          & DSP + CNN \\
    Multiplicative depth& ---               & 11 \\
    Security level      & ---               & 128-bit classic \\
    Scaling technique   & ---               & \textsc{FlexibleAuto} \\
    \bottomrule
  \end{tabular}
\end{table}

%% ----------------------------------------------------------------

\section{Related Work}
\label{sec:related_work}

\noindent\textbf{Privacy Risks of Wireless Sensing.}
mmWave FMCW radar is now a mature modality for contactless monitoring of vital signs~\cite{adib2015smart,vilesov2022blending}, sleep~\cite{zhao2017learning}, gait~\cite{meng2020gait}, gesture~\cite{li2022towards, li2022di}, falls~\cite{jin2020mmfall}, and elderly care~\cite{guo2021design,alhazmi2024intelligent}. However, the same signal richness that enables these applications creates serious privacy risks: mmWave radar can eavesdrop on phone calls via vocal-cord vibrations~\cite{basak2022mmspy}, re-identify individuals from gait signatures~\cite{vandersmissen2018indoor}, and infer sensitive attributes from beam patterns~\cite{xiao2025lend}. Wi-Fi signals pose similar threats, enabling 3D person re-identification~\cite{ren2023person}. A recent systematization of 169 wireless sensing privacy papers~\cite{sun2022sok} concludes that effective defenses for protecting raw sensor data during cloud-side computation remain largely absent.

\noindent\textbf{Privacy Defenses for Wireless Sensing.}
Existing defenses fall into three families, none of which provides cryptographic data-in-use confidentiality for outsourced computation.

\noindent\textit{Data perturbation:} Differential privacy destroys the submillimeter phase precision that vital sign \rev{detection methods} demand~\cite{wang2025privacy}, while training-time poisoning \revdel{e.g., Poison to Cure} protects learned models but not the raw sensor streams ~\cite{hu2025poison}.

\noindent\textit{Physical-layer:} MIMOCrypt~\cite{luo2024mimocrypt} uses MIMO channel encryption, while VitalHide~\cite{gao2025vitalhide} and PrivyWave~\cite{gao2025privywave} use decoy signal injection, but both require physical-layer control over the sensing environment and do not address cloud outsourcing.

\noindent\textit{Federated learning:} It protects training data, not inference-time confidentiality of streaming sensors~\cite{sun2022sok}, which is the privacy threat primarily considered in \name{}.
% \name{} fills the orthogonal gap: cryptographic confidentiality of raw data \textit{during} computation, without hardware trust assumptions or accuracy degradation.

\noindent\textbf{Homomorphic Encryption for Sensing and Inference.}
FHE has been applied to protect ML inference, but prior work encrypts only the classification stage. HEAR~\cite{kim2022secure} runs CKKS-encrypted CNN inference on pre-extracted skeleton joints for fall detection, while RPRS~\cite{wu2025rprs} pairs an mmWave radar with an FHE server for trajectory recognition. Critically, RPRS encrypts only the CNN: the entire upstream DSP pipeline (range-FFT, Doppler-FFT, CFAR) executes in plaintext, leaving all intermediate representations exposed. NeuJeans~\cite{ju2024neujeans} accelerates homomorphic convolution. For encrypted signal processing, QASP~\cite{nguyen2025quantized} demonstrates FHE for audio STFT/MFCC. Visionary work by Aguilar-Melchor et al.~\cite{aguilar2013recent} and Lagendijk et al.~\cite{lagendijk2012encrypted} predicted FHE-based DSP but predated CKKS. Lastly, Mazzone et al.~\cite{mazzone2025efficient} \rev{proposed efficient} encrypted argmax \revdel{at 12.8--14.2\,s }on CPU.
% \name{} encrypts the \textit{entire} radar DSP pipeline---from raw range profiles through detection, phase extraction, and filtering---not just a downstream classifier. This requires reformulating non-linear, data-dependent operations as fixed polynomial circuits, and provides the stronger guarantee that no intermediate representation is ever exposed in plaintext.

\noindent\textbf{FHE Systems and Acceleration.}
Several open-source CKKS libraries underpin modern FHE applications: OpenFHE~\cite{al2022openfhe} (C++, bootstrapping, Chebyshev evaluation, Intel HEXL), Microsoft SEAL~\cite{peng2019danger} (simpler API, no bootstrapping), and TenSEAL~\cite{benaissa2021tenseal} (Python/PyTorch bindings over SEAL). Algorithmic advances further reduce overhead: minimax composite polynomials~\cite{cheon2022efficient} cut encrypted comparison cost by ${\sim}45\%$, and P2P-CKKS~\cite{osei2025p2p} optimizes encrypted FFT via dynamic padding. GPU-accelerated frameworks~\cite{agullo2025fideslib, ozcan2023homomorphic} achieve $100$--$2000\times$ speedups, bringing even bootstrapping into the sub-second regime on modern hardware. Taken together, these library, algorithmic, and hardware advances have shifted FHE from a theoretical concept to a deployable systems primitive, \rev{as evident in \name{}.} \revdel{\name{} builds directly on this maturing ecosystem: its entire encrypted radar pipeline runs end-to-end on commodity hardware today, and} Each generation of CKKS tooling only narrows the gap to real-time operation.

\section{Limitations and Future Work}
\label{sec:discussion}

\begin{revblock}
This paper shows the initial feasibility of executing an entire mmWave sensing pipeline under FHE. In this section, we discuss the current limitations of \name{} and future directions to address them.

\noindent\textbf{Sensing Performance.} \textit{Accuracy:}
The FHE-friendly approximations trade accuracy for data-oblivious fixed circuit designs. The gesture pipeline reaches 84.5\% versus RPRS~\cite{wu2025rprs}'s 95.0\%.
This gap comes from the simpler feature extraction
and the three-layer network that fit \name{}'s bootstrapping-free depth-11 budget, whereas RPRS computed richer feature maps in plaintext and fed a CNN. To close this gap, future work can adopt FHE inference of advanced architecture like transformer based models~\cite{chen2022x}.

\noindent\textit{Single Dominant Target:}
\name{}'s current pipeline is suited for a single target, as the soft power attention kernels collapse the scene to one dominant subject.
Future work can extend the \name{} library to multiple targets by utilizing Bartlett beamforming and higher-order power-moment kernels, combined with a public grid of range and angle sectors. This is feasible in FHE because each of the techniques above can be composed data-obliviously, and \name{} can run in parallel in every predefined sector in the grid.
\noindent The same fixed-circuit  approach extends to dense 3D heatmap construction and processing as well. Point cloud processing, however, primarily relies on data-dependent detection and clustering, which the \name{} kernel library currently does not support (Theorem \ref{thm:data_oblivious}). Future work will address designing kernels that produce fixed-size detection statistics that allow a secure client to recover point-cloud-like outputs after decryption.

\noindent\textit{Complex Environments:}
All three evaluation datasets record one line-of-sight subject at short range in ordinary indoor rooms; strong multipath, weak direct paths, and non-line-of-sight propagation are not explicitly tested in this work. Future work needs to validate \name{} in more complex and real-world challenging environments.

\noindent\textbf{Latency and Online Operation.} \name{} currently runs offline and in batch mode. Even though the batched packing brings the vital-signs pipeline below the sensor data rate (1.21\,s per 10\,s window), the gesture pipeline still takes 5.76\,s per 3\,s window.  Time-sensitive tasks such as fall detection are still over the latency budget of continuous real-time monitoring. To enable online operation in the future, first trivial requirement is an infrastructure that chains client encryption with cloud evaluation over sliding windows. Such streaming would also enable cross-frame state tracking. Second, we can reduce the FC layer costs through cyclic weight embedding and hoisted rotations \cite{halevi2014algorithms}. Lastly, the latency can be further improved by utilizing emerging FHE accelerators like~\cite{agullo2025fideslib, noauthorintelsnodate, noauthordprivenodate}, so that deeper models like transformers can run without bootstrapping.

\noindent\textbf{End-to-End Implementation.} \name{}'s current implementation encrypts range profiles, after lightweight clutter removal and normalization performed in plaintext on the trusted client. With enough compute, \name{} could easily be extended beyond range profiles to the actual raw ADC, because the K3 DFT kernel can easily be used in that case. The main challenge in either case is the dynamic range of the input data. In FHE, normalization is a critical step, and without it the encryption noise destroys the computation. Whatever the input (range profile or raw ADC), the client must estimate a stable normalization factor efficiently (preferably in linear time or better) before encryption to ensure that the data is in the stable numerical range of the underlying FHE scheme. In our experiments, a calibrated Parseval-based energy estimate showed promise in this regard.

\noindent\textbf{Malicious Cloud.} \name{} assumes a semi-honest cloud; a malicious cloud could evaluate a different circuit and return an unauthorized inference that the client would unknowingly decrypt. To address this, future work could explore verifiable FHE and zero-knowledge proofs of correct circuit evaluation, alongside lightweight client-side spot checks that recompute randomly sampled windows. A complementary direction is sensor attestation: a hardware root of trust on the radar device would sign captured range profiles, ensuring that encrypted inputs originate from a genuine sensor rather than fabricated data. Combining attested capture with proofs of computation can provide an end-to-end chain of custody, from the sensor through encrypted cloud evaluation to the decrypted result, towards truly private and cryptographically secure mmWave sensing.
\end{revblock}

%% ----------------------------------------------------------------
\section{Conclusion}
\label{sec:conclusion}

We presented \name{}, the first system to execute the entire mmWave radar DSP and ML inference pipeline on encrypted data. By replacing standard signal-processing routines with seven FHE-compatible kernels: energy integration, soft power attention, DFT, soft I/Q extraction of highest-energy bin, FIR filtering, notch masking, and differential phase extraction, \name{} enables \rev{a semi-honest} cloud to process \rev{encrypted} sensor streams without ever observing plaintext values. Evaluation on three public datasets (270 vital-sign recordings, 600 gesture trials) shows HR/RR estimation within ${<}\,10^{-3}$\,bpm of the plaintext baseline and 84.5\% gesture accuracy with 99.8\% prediction agreement. Formal analysis establishes input privacy and \rev{circuit-level} data obliviousness for any pipeline composed from the kernel library. \rev{With batched ciphertext packing, \name{} shows initial feasibility of end-to-end mmWave sensing under FHE on commodity hardware.} Thus, \name{} opens a path toward privacy-preserving wireless health monitoring and smart sensing applications.

\begin{acks}
This research was supported by the Digital Life Initiative Doctoral Fellowship at Cornell Tech.
\end{acks}

\bibliographystyle{ACM-Reference-Format}
\bibliography{ref}
\appendix
\section{FMCW Radar Sensing Primer}
\label{sec:app_primer}
\label{subsec:fmcw_primer}

\noindent\textbf{FMCW Sensing.}
A Frequency-Modulated Continuous Wave (FMCW) radar transmits a chirp waveform for sensing, a sinusoidal linear chirp of duration $T_c$ and slew rate $b$ is:
\begin{equation}
    s_t(t) = e^{j2\pi\left(f_c t + \frac{b t^2}{2}\right)}, \quad 0 \le t \le T_c
\end{equation}
where $f_c$ is the carrier frequency. If the signal reflects off a target at distance $d$ and is received as an attenuated, delayed echo:
\begin{equation}
    s_r(t) = \alpha \cdot s_t(t-\tau) \cdot e^{j2\pi f_d t}
\end{equation}
where $\alpha$ is the attenuation factor, $\tau = 2d/c_{\text{light}}$ is the round-trip delay ($c_{\text{light}}$ is the speed of light), and $f_d$ is the Doppler frequency.

\noindent\textbf{Mixing and IF Signal.}
The receiver multiplies the echo by the conjugate of the transmitted signal, producing an Intermediate Frequency (IF) signal:
\begin{equation}
    s_{IF}(t) \approx \mu \cdot \alpha \cdot e^{j(2\pi \Delta f \, t + \varphi(\tau))}
\end{equation}
where $\Delta f \approx 2bd/c_{\text{light}}$ is the beat frequency proportional to target range, and $\varphi(\tau) \approx 2\pi f_c \tau$ encodes fine-grained range information. The IF signal is digitized by an ADC at rate $f_s$, producing $M$ discrete samples $x[n] = I[n] + jQ[n]$ per chirp per antenna.

\noindent\textbf{Range FFT.}
Spectral estimation via an $M$-point FFT on the IF samples resolves targets along the range axis, based on $\Delta f$. For antenna $a \in [A]$ and chirp $c \in [D]$:
\begin{equation}
    \label{eq:range_fft}
    z_{a,c}[r] = \sum_{n=0}^{M-1} x_{a,c}[n] \, e^{-j \frac{2\pi}{M} r n}, \quad r \in [R]
\end{equation}
where $R$ is the number of range bins retained. The peak index in $|z_{a,c}[r]|$ localizes the target; the range resolution is $\Delta R = c_{\text{light}} / 2B$ with chirp bandwidth $B = bT_c$.

\noindent\textbf{Doppler FFT.}
Each frame consists of $D$ chirps transmitted consecutively. For a fixed range bin $r$ and antenna $a$, the sequence $\{z_{a,c}[r]\}_{c=0}^{D-1}$ across chirps encodes the Doppler (velocity) of targets at that range. A $D$-point FFT across the chirp dimension produces the range-Doppler spectrum:
\begin{equation}
    \label{eq:doppler_fft}
    Z_{a}[r,d] = \sum_{c=0}^{D-1} w[c] \cdot z_{a,c}[r] \, e^{-j\frac{2\pi}{D} d c}, \quad d \in [D]
\end{equation}
where $w[c]$ is a window function (e.g., Hanning) applied to reduce spectral leakage. Targets appear as peaks in the 2D range-Doppler map, with the Doppler index $d$ proportional to radial velocity.

\noindent\textbf{Phase Tracking.}
Vital signs (breathing and heart rate) manifest as millimeter-level chest displacements too small to shift the range-bin peak. Instead, the phase of the complex signal at the target range bin encodes these micro-motions. Small radial displacement $\Delta d$ induces a phase shift:
\begin{equation}
    \Delta \varphi = \frac{4\pi \Delta d}{\lambda}
\end{equation}
When a human subject is at $\hat{r}$ in the sensing environment, extracting $\angle z_{a,c}[\hat{r}]$ across frames yields a time series proportional to chest displacement, processed via bandpass filtering (0.1--0.5\,Hz for respiration, 0.8--2.0\,Hz for heart rate) and spectral estimation to recover HR and RR.

\noindent\textbf{Continuous Monitoring.}
The radar transmits $F$ frames, each containing $D$ chirps across $A$ antennas and $R$ range bins. A single frame yields a range-Doppler-antenna tensor $\mathbf{Z}[t] \in \mathbb{C}^{A \times R \times D}$; $t \in \{0, 1, 2, \ldots, F-1\}$. Over $F$ frames, each element $z_{a,c}[r,t]$ carries two channels:
\begin{itemize}
    \item \textbf{Magnitude} $|z|^2$: tracks macro-movement (target presence, range).
    \item \textbf{Phase} $\angle z$: encodes micro-movement (sub-wavelength displacements such as breathing and heartbeat).
\end{itemize}
This duality, that both channels are inseparably encoded in the same complex IQ samples; is the source of the biometric \rev{leakage} formalized in \S3.

\noindent\textbf{Standard Application Pipelines.}
Two canonical radar processing chains operate as follows:
\begin{enumerate}
    \item \textit{Vital signs:} range FFT $\rightarrow$ target detection (peak or CFAR) $\rightarrow$ phase extraction at target bin $\rightarrow$ bandpass filtering $\rightarrow$ PSD estimation $\rightarrow$ HR/RR.
    \item \textit{Gesture recognition:} range FFT $\rightarrow$ Doppler FFT $\rightarrow$ range-Doppler feature extraction $\rightarrow$ neural network classifier $\rightarrow$ gesture label.
\end{enumerate}
Both chains require spectral transforms, non-linear detection, and data-dependent operations that must be reformulated for encrypted computation (\S\ref{sec:system_design}).

\section{Proof of \revdel{Inseparability}\rev{Biometric Leakage} (Proposition~\ref{prop:biometric_inseparability})}
\label{sec:app_A}

% Removed the restated theorem (formerly thm:biometric_inseparability): it
% duplicated Proposition~\ref{prop:biometric_inseparability} with narrower
% wording, creating the numbering/naming mismatch reviewers flagged. The
% appendix now holds only the proof of the proposition stated in §2.

\begin{proof}
\noindent\textit{Algebraic inseparability:}
Energy-based detection requires computing $E_k[t] = |\mathbf{z}_k[t]|^2$ from plaintext range-bin values $\mathbf{z}_k[t] \in \mathbb{C}$. To evaluate this, the server \underline{\textit{must}} receive the complex samples $\mathbf{z}_k[t]$.
% (or equivalent raw ADC inputs $\mathbf{x}[t]$ from which $z_k[t]$ is recoverable via the invertible DFT: $\mathbf{z}[t] = \mathbf{F}\mathbf{x}[t]$). Each $z_k[t]$ decomposes uniquely in polar form as $(|z_k[t]|,\; \phi_k[t])$.
Providing $\mathbf{z}_k[t]$ for magnitude-based detection therefore simultaneously and unavoidably discloses the phase $\phi_k[t]$. A semi-honest server logs all inputs, making this leakage immediate and irreversible.

\noindent\textit{Phase--displacement coupling:}
By the FMCW radar equation (App. \ref{subsec:fmcw_primer}), small radial displacements $\Delta d_k[t]$ of the target at range bin $k$ induce phase shifts $\Delta \varphi_k[t] = {4\pi \Delta d_k[t]}/{\lambda}$. This mapping is linear and invertible, so the adversary recovers $\Delta d_k[t]$ from the disclosed phase sequence.

\noindent\textit{Nyquist sufficiency:}
The adversary now holds $F$ samples of $\Delta d_k[t]$ at rate $f_s \geq 2 f_{\mu}$ over a duration $F/f_s \geq T_{\min}$. By the Shannon--Nyquist theorem, this is sufficient to reconstruct the continuous-time micro-displacement signal and resolve its spectral components at all frequencies up to $f_{\mu}$.
\end{proof}

\section{Proof of Input Privacy (Theorem~\ref{thm:input_privacy})}
\label{sec:app_B}

\begin{proof}
By a standard hybrid argument over the $F$ frames. Suppose a PPT
adversary $\mathcal{A}$ distinguishes the two encrypted streams
with advantage~$\epsilon$. We construct a reduction $\mathcal{B}$
that, given an IND-CPA challenge ciphertext for a single frame,
embeds it at position~$t^{*}$ (encrypting
$\tilde{\mathbf{z}}_0[t]$ for $t < t^{*}$ and
$\tilde{\mathbf{z}}_1[t]$ for $t > t^{*}$) and forwards the
stream to $\mathcal{A}$. By the hybrid lemma, $\mathcal{B}$ breaks
IND-CPA with advantage $\geq \epsilon / F$, which must be
negligible. Since $F$ is polynomial in $\lambda$, $\epsilon \le
F \cdot \mathsf{negl}(\lambda)$ is itself negligible.

Combined with \revdel{Theorem}\rev{Proposition~\ref{prop:biometric_inseparability}}, this
shows that \name{} \revdel{breaks the inseparability barrier}\rev{closes
the biometric leakage channel}: the cloud
evaluates detection, spectral processing, phase extraction,
filtering, and classification circuits on
$\mathsf{Enc}(\tilde{\mathbf{z}}[t])$ but cannot recover any
plaintext value---neither the magnitude for unauthorized tracking
nor the phase for unauthorized vital-sign extraction.
\end{proof}

\section{Proof of Data Obliviousness (Theorem~\ref{thm:data_oblivious})}
\label{sec:app_C}

\begin{proof}
The proof proceeds in two steps.

\smallskip\noindent\textit{Step~1 (Per-kernel obliviousness).}
Each kernel $K_i$ defines a fixed arithmetic circuit $C_i$ over CKKS ciphertexts: the sequence of homomorphic operations (additions, multiplications, rotations) and their memory access pattern are determined entirely by the public configuration $(R,\allowbreak F,\allowbreak A,\allowbreak D,\allowbreak \gamma,\allowbreak \text{Taylor order},\allowbreak \text{filter order},\allowbreak \text{FC dims})$ and are independent of the encrypted input. We verify this for all seven kernels and the FC layer by construction in \S\ref{subsec:verify_per_kernel}.

\smallskip\noindent\textit{Step~2 (Composition closure).}
Let $C_1, C_2$ be data-oblivious circuits. The trace of their
sequential composition satisfies
$\mathrm{Trace}(C_2 \circ C_1,\, \mathbf{x})
 = \mathrm{Trace}(C_1,\, \mathbf{x})
 \;\|\; \mathrm{Trace}(C_2,\, C_1(\mathbf{x}))$.
Since $C_1$ is data-oblivious, the first component is identical
for all $\mathbf{x}$. Since $C_2$ is data-oblivious, the second
component is identical for all inputs, including the varying
intermediate ciphertexts $C_1(\mathbf{x})$. By induction over a
chain of $k$ circuits, any pipeline composed from this kernel set
has an input-independent trace.
\end{proof}

\subsection{Per-Kernel Verification}
\label{subsec:verify_per_kernel}

We verify that each kernel defines a fixed arithmetic circuit
whose trace depends only on public parameters.

\smallskip\noindent\textbf{K1: Energy Integration
(\S\ref{subsec:energy}).}
Circuit: one ct-ct squaring and one ct-ct addition per frame,
repeated $F$ times. Trace fixed by $(R,\allowbreak F)$.

\smallskip\noindent\textbf{K2: Soft Power Attention
(\S\ref{subsec:soft_argmax}).}
Circuit: $\log_2\gamma$ sequential squarings, one pt-ct multiply
(public index vector), and a rotation-based reduction. Trace fixed
by $(R,\allowbreak \gamma)$.

\smallskip\noindent\textbf{K3: DFT via Block-Diagonal Matmul
(\S\ref{subsec:doppler_dft}).}
Circuit: BSGS diagonal method on the fixed block-diagonal matrix
$\tilde{\mathbf{C}}$
(Eq.~\ref{eq:dft_re}). Trace fixed by
$(D,\allowbreak \text{window})$.

\smallskip\noindent\textbf{K4: Soft I/Q Extraction
(\S\ref{subsec:phase}).}
Circuit: computes $(|z|^2)^{P_\phi} \cdot z$ and reduces via
summation (Eq.~\ref{eq:phase_iq}). All $R$ bins processed
identically. Trace fixed by $(R,\allowbreak P_\phi)$.

\smallskip\noindent\textbf{K5: FIR Filtering
(\S\ref{subsec:fir}).}
Circuit: pt-ct multiply with public FIR coefficients
(Eq.~\ref{eq:fir_iq}). Backend (Toeplitz or rotation-based)
selected at configuration time. Trace fixed by
$(\text{filter order},\allowbreak F)$.

\smallskip\noindent\textbf{K6: Notch Mask
(\S\ref{subsec:notch}).}
Circuit: single pt-ct multiply with a fixed binary mask
(Eq.~\ref{eq:notch_mask}). Trace fixed by $D$.

\smallskip\noindent\textbf{K7: Differential Phase Extraction
(\S\ref{subsec:taylor_arctan}).}
Circuit: evaluates the fixed polynomial
$\Delta\phi[t] \approx y[t]\,x[t]^2 - \tfrac{1}{3}\,y[t]^3$
(Eq.~\ref{eq:taylor_arctan}). Trace fixed by
$(\text{Taylor order})$.

\smallskip\noindent\textbf{FC Inference
(\S\ref{subsec:mlp}).}
Circuit: $L$ pt-ct matmuls and $L{-}1$ ct-ct squarings
(Eq.~\ref{eq:mlp_forward}). The square activation $\sigma(x)=x^2$
has no conditional branches. Trace fixed by
$(d_0,\allowbreak d_1,\allowbreak \ldots,\allowbreak d_L)$.

\smallskip\noindent\textit{Remark.}
Per-frame amplitude normalization is performed client-side before
encryption; it does not affect the cloud-side trace.

\end{document}
\endinput
%%
%% End of file `sample-sigconf.tex'.